\documentclass[a4paper,onecolumn,11pt,unpublished]{quantumarticle}
\pdfoutput=1
\usepackage[utf8]{inputenc}
\usepackage[english]{babel}
\usepackage[T1]{fontenc}
\usepackage{amsmath}
\usepackage{amssymb}
\usepackage{mathtools}
\usepackage[colorlinks = true, citecolor= blue, urlcolor= blue, linkcolor = blue]{hyperref}
\usepackage[numbers,sort&compress]{natbib}
\usepackage[normalem]{ulem}

\usepackage{tikz}
\usepackage{lipsum}

\usepackage{amsthm}
\theoremstyle{plain}
\newtheorem{thm}{Theorem}
\newtheorem{lem}[thm]{Lemma}

\theoremstyle{definition}
\newtheorem{mydef}[thm]{Definition}

\newcommand{\A}{\mathcal{A}}
\newcommand{\B}{\mathcal{B}}
\newcommand{\X}{\mathcal{X}}
\newcommand{\Y}{\mathcal{Y}}
\newcommand{\Ah}{\hat{\mathcal{A}}}
\newcommand{\Bh}{\hat{\mathcal{B}}}
\newcommand{\Xh}{\hat{\mathcal{X}}}
\newcommand{\Yh}{\hat{\mathcal{Y}}}

\newcommand{\R}{\mathcal{R}}
\newcommand{\Q}{\mathcal{Q}}
\renewcommand{\H}{\mathcal{H}}
\newcommand{\E}{\mathcal{E}}
\newcommand{\F}{\mathcal{F}}
\renewcommand{\P}{\mathcal{P}}

\newcommand{\freq}{\mathrm{freq}}
\renewcommand{\Pr}{\mathrm{Pr}}
\newcommand{\tr}{\mathrm{tr}}
\newcommand{\vol}{\mathrm{vol}}
\newcommand{\der}{\mathrm{d}}
\newcommand{\id}{\mathrm{id}}

\newcommand{\norm}[1]{\left\lVert#1\right\rVert}
\newcommand{\simplex}[1]{\Delta^{#1}}
\newcommand{\cl}[1]{\operatorname{cl}(#1)} 

\newcommand{\Reals}{\mathbb{R}}

\newcommand{\Integers}{\mathbb{N}}

\newcommand{\edit}[2]{#2}
\newcommand{\editB}[2]{#2}

\renewcommand{\[}{
\begin{equation}
}
\renewcommand{\]}{\end{equation}}

\begin{document}

\title{De Finetti Theorems for Quantum Conditional Probability Distributions with Symmetry}

\author{Sven Jandura}
\affiliation{Institut de Science et d’Ingénierie Supramoléculaires (UMR 7006), University of Strasbourg, 8 Allée Gaspard Monge, 67000 Strasbourg, France}
\affiliation{Institute for Theoretical Physics, ETH Zurich, 8093 Zurich, Switzerland}
\orcid{0000-0003-0282-7637}
\author{Ernest Y.-Z. Tan}
\affiliation{Institute for Quantum Computing and Department
of Physics and Astronomy, University of Waterloo, Waterloo, Ontario N2L 3G1, Canada}
\affiliation{Institute for Theoretical Physics, ETH Zurich, 8093 Zurich, Switzerland}
\orcid{0000-0003-4872-158X}

\begin{abstract}
The aim of device-independent quantum key distribution (DIQKD) is to study protocols that allow the generation of a secret shared key between two parties under minimal assumptions on the devices that produce the key. These devices are merely modeled as black boxes and mathematically described as conditional probability distributions. A major obstacle in the analysis of DIQKD protocols is the huge space of possible black box behaviors. De Finetti theorems can help to overcome this problem by reducing the analysis to black boxes that have an iid structure. Here we show two new de Finetti theorems that relate conditional probability distributions in the quantum set to de Finetti distributions (convex combinations of iid distributions), that are themselves in the quantum set. We also show how one of these de Finetti theorems can be used to enforce some restrictions onto the attacker of a DIQKD protocol. Finally we observe that some desirable strengthenings of this restriction, for instance to collective attacks only, are not straightforwardly possible.
\end{abstract}

\section{Introduction}
The aim of quantum key distribution is to establish a shared key between two parties, commonly called Alice and Bob, that is unknown to any third party, commonly called Eve. To achieve this goal, Alice and Bob can share an entangled quantum state and use the correlated outcomes of measurements on this state to generate a secure key pair via a postprocessing protocol. If Eve has tampered with the shared state, Alice and Bob either notice this and abort the protocol, or are able to generate a secure key pair anyway \cite{Bennett2014, Ekert1991, Renner2008}. In device-independent quantum key distribution (DIQKD) we assume that Eve not only has control over the shared state, but is also able to manipulate the devices that Alice and Bob use to measure the state. As long as the devices are not manipulated in a way that sends information out of Alice's and  Bob's laboratories through channels other than those controlled by Alice and Bob, there are still protocols that allow for the generation of a shared secret key \cite{Barrett2005, Acin2007, Pironio2009, Vazirani2014, ArnonFriedman2018}.

In the device-independent context, the devices of Alice and Bob are treated as black boxes and modeled by a conditional probability distribution $P_{AB|XY}$. Alice and Bob can give inputs $x$ and $y$ respectively to the box, and receive outputs $a$ and $b$ with probability $P_{AB|XY}(ab|xy)$. We will often write $P(ab|xy)$ instead of $P_{AB|XY}(ab|xy)$ when the random variables $A$, $B$, $X$ and $Y$ are implicitly understood.  In the device dependent case the inputs correspond\edit{s}{} to the choice of measurement basis, and the outputs to the results of the measurement. We will denote the sets of possible inputs by $\X$ and $\Y$, and the sets of outputs by $\A$ and $\B$. If the inputs and outputs are strings of length $n$, i.e.~they are of the form $\X = \Xh^n$ \edit{}{(where $\Xh$ denotes some set of possible single round inputs)} and analogously for $\Y$, $\A$ and $\B$, we call $P_{AB|XY}$ an $n$-round box. If $Q_{\hat{A}\hat{B}|\hat{X}\hat{Y}}$ is a box with inputs and outputs in $\Xh,\Yh,\Ah,\Bh$ we denote by $Q^{\otimes n}_{AB|XY}$ the $n$-round iid box with
\[
Q^{\otimes n}(ab|xy) = \prod_{i=1}^n Q(a_ib_i|x_iy_i) .
\]

Not all boxes $P_{AB|XY}$ describe processes that are physically possible if we assume that no information can leave the laboratories of Alice and Bob. All boxes must then be such that Bob gains no information about Alice's input from his output, and vice versa. We refer to boxes satisfying this constraint as non-signaling:
\begin{mydef}
A box $P_{AB|XY}$ is non-signaling if
\begin{align}
\forall b,y,x,x' \qquad \sum_a P(ab|xy) &= \sum_a P(ab|x'y)\\
\forall a,x,y,y' \qquad \sum_b P(ab|xy) &= \sum_b P(ab|xy').
\end{align}
\end{mydef}
If we furthermore assume that the boxes are described by quantum theory, we can describe the distribution $P_{AB|XY}$ by some quantum state shared between Alice and Bob and some POVMs describing their measurements. 
\begin{mydef}
A box $P_{AB|XY}$ is quantum if there are Hilbert spaces $\H_A$, $\H_B$, a state $\rho_{AB} \in \mathrm{End}\left(\H_A \otimes \H_B\right)$, for each $x$ a POVM $\left\lbrace E^{a,x} | a \right\rbrace$ \edit{}{on $\H_A$} and for each $y$ a POVM $\left\lbrace F^{b,y} | b \right\rbrace$ \edit{}{on $H_B$} such that
\[
P(ab|xy) = \tr\left[\rho_{AB} \left(E^{a,x} \otimes F^{b,y}\right)\right].
\]
\end{mydef}
\noindent The set of quantum boxes is a proper subset of the set of non-signaling boxes. That both sets are not identical is demonstrated by the Popescue-Rohrlich box \cite{Popescu1994}.  

When constructing DIQKD security proofs, often the analysis would be simplified if there were some form of reduction from general box behaviour to the iid case, as it is substantially easier to construct security proofs for the latter (as achieved in e.g.~\cite{Acin2007, Pironio2009, Tan2020}). To find such a reduction, so-called de Finetti theorems may be a promising tool, as they have previously been used to achieve this goal in the case of device-dependent QKD~\cite{Renner2008,Christandl2009a}.  De Finetti theorems allow us to relate the entries of an arbitrary permutation invariant box to the entries of a de Finetti box (a convex combination of iid boxes). De Finetti theorems where originally developed for random variables \cite{Diaconis1980} and then extended to quantum states \cite{Hudson1976, Caves2002, Christandl2007, Renner2008} and boxes \cite{ArnonFriedman2015, Christandl2009}. 
For example, in \cite{ArnonFriedman2015} it was shown that for each set of \editB{}{single round inputs $\Ah$ and outputs $\Xh$} there exists a de Finetti box $\tau_{A|X}$ such that for all permutation invariant boxes $P_{A|X}$ it holds that
\[
\forall a\in \A,\, x\in\X \qquad P(a|x) \leq (n+1)^{\editB{}{|\Xh|(|\Ah|}-1)} \tau(a|x).
\label{eq:de_finetti_theorem_general}
\]
(Here we treat the inputs and outputs of Alice and Bob as lumped together to a single input and output.)

\editB{}{
However, the de Finetti theorems for boxes derived in e.g.~\cite{ArnonFriedman2015, Christandl2009} have the drawback that the de Finetti boxes cannot be restricted to the quantum set even if the original permutation invariant boxes are quantum. This creates an obstacle for applications, because many existing DIQKD security proofs under the iid assumption exploit the properties of the quantum set \cite{Acin2007, Pironio2009, Tan2020}. This implies that such proofs cannot be combined with the de Finetti theorems in~\cite{ArnonFriedman2015, Christandl2009} to obtain security against non-iid attacks, as those de Finetti theorems involve boxes that are not in the quantum set. (It is true that one could aim to derive a security proof for all iid behaviours in the non-signaling rather than quantum set, then apply the de Finetti theorems of~\cite{ArnonFriedman2015, Christandl2009} to obtain security against non-iid attacks. However, this would give lower asymptotic keyrates and noise tolerance compared to security proofs against quantum attackers, because non-signaling behaviours yield a significantly larger class of possible attacks.)
Ideally, we would like to find a de Finetti theorem that can extend the iid security proofs against quantum attackers in~\cite{Acin2007, Pironio2009, Tan2020} to cover non-iid quantum attackers, while preserving the asymptotic keyrates and noise tolerance from those proofs, similar to the situation for device-dependent QKD~\cite{Christandl2009a}. While we do not fully achieve this goal in this work, we do obtain a de Finetti theorem that allows a partial reduction to the iid case (in a sense described in section~\ref{sec:applications}), and we also highlight some concrete difficulties that may be faced when aiming for a full reduction.

Regarding other existing approaches for reductions to the iid case, we note that for DIQKD protocols that use only one-way communication for error correction \cite{Renner2008}, a proof technique known as the entropy accumulation theorem (EAT) \cite{Dupuis2020} can be used to essentially reduce the analysis of non-iid (but time-ordered) boxes to the iid scenario \cite{ArnonFriedman2018}. Alternatively, the techniques in~\cite{Jain2020,Vidick2017} can be used to obtain security proofs for such protocols even when the boxes accept all inputs in parallel, though the resulting asymptotic keyrates are lower than in the iid case. There are however protocols that don't only use one-way error correction (broadly referred to as advantage distillation protocols \cite{Maurer1993,Wolf1999,Chau02,Gottesman2003,Renner2008,Tan2020}, such as the Cascade protocol \cite{Brassard1994} or the repetition-code protocol \cite{Maurer1993,Renner2008}), and these protocols do not admit a security proof via those approaches.\footnote{\editB{}{More specifically: the proof approaches in those works essentially rely on bounding the conditional smooth min-entropy~\cite{Renner2008} of the ``raw'' box outputs (or a subset thereof), then compensating for the additional information revealed during one-way error correction by subtracting the number of bits communicated during that step. However, advantage distillation protocols may perform a significant amount of processing on the box outputs \emph{before} the privacy amplification step (informally: the step in which the data is transformed into the final key), and furthermore they often communicate a very large number of bits in the process as compared to one-way error correction (see e.g.~the repetition-code protocol~\cite{Maurer1993,Renner2008}). Hence the proof techniques in~\cite{ArnonFriedman2018,Jain2020,Vidick2017} are difficult to extend to these advantage distillation protocols.}}
The significance of these protocols in DIQKD is that under an iid assumption, it has been shown~\cite{Tan2020} that they can achieve higher noise tolerances than one-way error correction (i.e.~they can achieve positive keyrates even when the keyrate given by one-way error correction is zero), analogous to results for device-dependent QKD~\cite{Chau02,Gottesman2003,Renner2008}. 
However, for device-dependent QKD these improved noise tolerances can be lifted to the non-iid case using de Finetti arguments as mentioned above, whereas in DIQKD such an argument is currently missing --- in fact, there are currently no security proofs for DIQKD advantage distillation protocols against non-iid attacks. Finding a way to resolve this would be useful in, for instance, tackling a foundational question of characterizing which nonlocal box behaviours can be used for DIQKD~\cite{Farkas2021} (analogous to the question of \textit{bound information} in device-dependent QKD~\cite{Gisin2000}), since advantage distillation can have higher noise tolerances than one-way error correction.
}

Our main result in this work consists of two de Finetti theorems for Clauser-Horne-Shimony-Holt (CHSH) symmetric quantum boxes (see definition~\ref{def:chsh_symmetry}), such that the de Finetti box is quantum as well. We further show how the first de Finetti theorem could be used in the security proofs of DIQKD protocols, \editB{}{yielding a partial reduction to the iid case.}

The rest of this paper is structured as follows: In section~\ref{subsec:first_de_finetti_theorem} we show the first de Finetti theorem (theorem \ref{thm:de_finetti_chsh_symmetric}). It is similar to eq.~\eqref{eq:de_finetti_theorem_general} and shows that the entries of a CHSH symmetric quantum box are upper bounded, up to a factor polynomial in $n$, by the entries of a fixed quantum de Finetti box. In section~\ref{subsec:second_de_finetti_theorem} we then show the second de Finetti theorem (theorem \ref{thm:de_finetti_theorem_2}), which is closer to the original de Finetti theorems for random variables and quantum states. It states that the marginal of the first $k$ rounds of a $n$-round CHSH symmetric quantum box is close to (and not just upper bounded by) a quantum de Finetti box. Our results in this section rely on the existence of appropriate threshold theorems (see e.g.~theorem~\ref{thm:threshold_unger} below). A natural question is whether it is possible to derive them without using the threshold theorems; however, we show in appendix~\ref{sec:a_de_Finetti_theorem_for_general_symmetries} that proving a de Finetti theorem of the first form is essentially equivalent to proving a threshold theorem, hence it would be a result of comparable difficulty. 

In light of this, our results cannot currently be used as an alternative method to prove threshold theorems. However, our focus is more on the application of these results for DIQKD security proofs.
Hence in section~\ref{sec:applications}, we present an application of the first de Finetti theorem to bound the diamond distance between two channels acting on boxes. 
The diamond distance measures how well these two channels can be distinguished by an attacker. Since the security of a DIQKD protocol is related to the diamond distance between the protocol and an ideal channel \cite{Maurer2011, Portmann2021}, bounds on the diamond distance can be useful in DIQKD security proofs. We show that to prove security of a DIQKD protocol against arbitrary (so-called coherent \cite{Scarani2012}) quantum attacks it is sufficient to prove security against an adversary who holds an extension of a fixed quantum de Finetti box (theorem~\ref{thm:postselection_ns}). However, this extension may not be quantum itself and can only be restricted to the non-signaling set. 

In section~\ref{sec:impossible_strengthening} we  show that the result from section~\ref{sec:applications} cannot be strengthened to restrict the attacker further to collective attacks \cite{Scarani2012} (attacks where the black box can be described by an iid quantum state and iid measurements for Alice and Bob, see definition~\ref{def:collective_attack_box}). For this, we construct two channels that cannot be distinguished at all using boxes compatible with collective attacks, but can be distinguished if arbitrary quantum boxes are available (theorem~\ref{thm:no_postselection_for_collective_attacks}). This shows that the theorem from section~\ref{sec:applications} cannot be immediately used to conclude security against coherent attacks from security against collective attacks.

\section{De Finetti Theorems for boxes with CHSH symmetry}
\label{sec:de_finetti_theorems}
\subsection{The first de Finetti theorem}
\label{subsec:first_de_finetti_theorem}
Let us first define de Finetti boxes and CHSH symmetry:
\begin{mydef}
A $n$-round box $\tau_{AB|XY}$ is called de Finetti if it is the convex combination of iid boxes.
\end{mydef} 
 \begin{mydef}
\label{def:chsh_symmetry}
An $n$-round box $P_{AB|XY}$ with single round inputs and outputs in \editB{}{$\Ah=\Bh=\Xh=\Yh=\{0,1\}$} is called CHSH symmetric if
\[
P(ab|xy) = P(a'b'|x'y') \qquad \text{ whenever }\norm{a\oplus b \oplus xy}_0 = \norm{a'\oplus b' \oplus x'y'}_0.
\]
Here $\norm{x}_0$ denotes the number of non-zero entries of a $n$-bit string $x$.
\end{mydef}
If for an index $i$ we have $a_i\oplus b_i = x_iy_i$ we say that the CHSH game is won in round $i$ \cite{Clauser1969}. Thus, definition~\ref{def:chsh_symmetry} basically states that a box is CHSH symmetric if its entries $P(ab|xy)$ only depends on how many indices the CHSH game was won. Our definition of CHSH symmetry differs slightly from the one in \cite{ArnonFriedman2015}, where it is only required that $P(ab|xy)=P(a'b'|x'y')$ whenever $a\oplus b \oplus xy = a'\oplus b' \oplus x'y'$. Our definition agrees with the definition in \cite{ArnonFriedman2015} for permutation symmetric boxes --- essentially, we have implicitly incorporated the constraint of permutation symmetry into definition~\ref{def:chsh_symmetry} itself.

Of course an attacker can initially manipulate the boxes of Alice and Bob such that they do not possess CHSH symmetry. However, Alice and Bob can run the following procedure to enforce CHSH symmetry: First Alice chooses a random permutation $\pi$ and transmits it to Bob over the authenticated channel, then Alice and Bob permute their inputs and outputs according to $\pi$. If we view the original box as a conditional probability distribution $P_{ABE|XYZ}$ for Alice, Bob and Eve we can view $\pi$ as additional knowledge \edit{}{$E'$} of Eve and describe the box after $\pi$ has been applied by $\tilde{P}_{ABEE'|XYZ}$. \edit{}{Alice and Bob will not require $\pi$ for the remainder of the protocol and can now discard it; therefore, in the rest of our discussion we do not include it in the marginal of the Alice-Bob boxes, whereas on Eve's component we will simply absorb $E'$ into $E$ and no longer explicitly denote it.} Then the marginal $\tilde{P}_{AB|XY}$ has permutation symmetry. To go from permutation symmetry to CHSH symmetry Alice and Bob apply the depolarization protocol described in appendix A of \cite{Masanes2006} to each round. If the box in the honest implementation of the DIQKD protocol has CHSH symmetry it is unchanged by this depolarization protocol. It is important to note that only the marginal box of Alice and Bob has CHSH symmetry after this protocol: from the perspective of Eve, who knows the permutation $\pi$ and the random bits chosen in the depolarization protocol in \cite{Masanes2006}, the box may not have CHSH symmetry. \edit{}{However, we highlight that in the case of device-dependent QKD, this did not prevent constructing a security proof via de Finetti arguments~\cite{Christandl2009a}, and hence there still remains the possibility of a similar result for DIQKD.}

It was shown in \cite{ArnonFriedman2015} that a de Finetti theorem holds for CHSH symmetric boxes:
\begin{thm}[Corollary 6 in \cite{ArnonFriedman2015}]
For each number $n$ of rounds there is an $n$-round CHSH symmetric de Finetti box $\tau_{AB|XY}$ such that for all CHSH symmetric boxes $P_{AB|XY}$ it holds that
\[
P(ab|xy) \leq (n+1)\tau(ab|xy).
\]
\label{thm:de_finetti_chsh_arf}
\end{thm}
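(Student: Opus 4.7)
The plan is to use the strong parameterization that CHSH symmetry forces. Writing $k := \norm{a \oplus b \oplus xy}_0$, definition~\ref{def:chsh_symmetry} tells us that $P(ab|xy) = p_k$ for coefficients $p_0,\dots,p_n$ that depend only on $k$. For any fixed $(x,y)$ the number of pairs $(a,b)\in\{0,1\}^n\times\{0,1\}^n$ with $\norm{a\oplus b \oplus xy}_0 = k$ is $\binom{n}{k} 2^n$: pick which $k$ of the $n$ rounds are lost, then in each round there are exactly two pairs $(a_i,b_i)$ compatible with the required parity of $a_i\oplus b_i$. Normalization of $P(\,\cdot\,|xy)$ therefore collapses to the single identity $\sum_{k=0}^n \binom{n}{k} 2^n p_k = 1$, and since every term is nonnegative I immediately get $2^n\binom{n}{k} p_k \leq 1$ for every $k$.

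Next I would guess $\tau$ explicitly. For each $q\in[0,1]$ let $Q_q$ be the single-round CHSH-symmetric box with $Q_q(ab|xy) = q/2$ when $a\oplus b\oplus xy = 1$ and $(1-q)/2$ otherwise, and set $\tau := \int_0^1 Q_q^{\otimes n}\,\der q$. By construction $\tau$ is a convex combination of iid boxes, hence de Finetti, and it is CHSH symmetric because each $Q_q^{\otimes n}$ is. A direct product computation gives $Q_q^{\otimes n}(ab|xy) = q^k(1-q)^{n-k}/2^n$ with $k$ as above, and a Beta-function evaluation yields
\[
\tau(ab|xy) = \frac{1}{2^n}\int_0^1 q^k(1-q)^{n-k}\,\der q = \frac{1}{2^n(n+1)\binom{n}{k}}.
\]

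Combining the two steps, $P(ab|xy) = p_k \leq 1/\bigl(2^n\binom{n}{k}\bigr) = (n+1)\tau(ab|xy)$, which is exactly the claim. There is no real technical hurdle once $\tau$ has been identified; the only nontrivial choice is to take the \emph{uniform} measure on $q$, which is tailored precisely to convert the trivial normalization bound $2^n\binom{n}{k} p_k \leq 1$ into an inequality with the desired polynomial prefactor $n+1$ on the right-hand side. I expect this to be essentially the approach used in~\cite{ArnonFriedman2015}, modulo presentation.
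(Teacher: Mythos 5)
Your proof is correct. The paper does not include its own derivation of this statement --- it imports it directly from Corollary 6 of \cite{ArnonFriedman2015} --- but your argument agrees with the counting fact the paper does borrow from that reference (for fixed $x,y,k$ there are $\binom{n}{k}2^n$ pairs $(a,b)$ with $k$ won, equivalently lost, CHSH games, used around eq.~\eqref{eq:prob_of_winning_k_games}), and it is precisely the non-quantum degenerate case of the paper's proof of theorem~\ref{thm:de_finetti_chsh_symmetric}. In that quantum version the mixture over $Q_q^{\otimes n}$ must be restricted to $q\in[1-w,w]$, so the exact Beta-function identity no longer closes the argument; the threshold theorem (theorem~\ref{thm:threshold_unger}) and the concavity lemma (lemma~\ref{lem:chsh_concavity_argument}) are then needed to replace what you get for free from normalization ($p_k \leq 1/(2^n\binom{n}{k})$) and the full $\int_0^1$ integral. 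This is exactly why the prefactor there degrades to $(n+1)^2$ while you land on $(n+1)$. One minor presentational note: your $k$ counts \emph{lost} games whereas the paper's $k$ counts wins, but the roles are symmetric here and nothing hinges on the choice.
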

\noindent Theorem \ref{thm:de_finetti_chsh_arf} was derived for all CHSH symmetric boxes $P_{AB|XY}$, even if they are not quantum. However, the de Finetti box $\tau_{AB|XY}$ constructed in the theorem is also not quantum. 
The main result we derive in this section is a de Finetti theorem for \emph{quantum} CHSH symmetric boxes, hence resolving this issue:
\begin{thm}
\label{thm:de_finetti_chsh_symmetric}
For each number $n$ of rounds there is an $n$-round CHSH symmetric  quantum de Finetti box $\tau_{AB|XY}$ such that for all CHSH symmetric quantum boxes $P_{AB|XY}$ it holds that
\[
P(ab|xy) \leq (n+1)^2 \tau(ab|xy).
\]
\end{thm}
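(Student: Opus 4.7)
Let $\omega^* = \cos^2(\pi/8)$ be Tsirelson's bound. For each $p \in [1-\omega^*,\omega^*]$ pick a quantum CHSH-symmetric single-round box $Q_p$ with winning probability $p$ (such $Q_p$ exist throughout this interval by Tsirelson's construction combined with classical mixing), and define
\[
\tau \;:=\; \frac{1}{2\omega^* - 1}\int_{1-\omega^*}^{\omega^*} Q_p^{\otimes n}\,\der p ,
\]
which is a quantum CHSH-symmetric de Finetti box by construction. Because both $P$ and $\tau$ are CHSH symmetric, their entries depend only on $k = \|a\oplus b\oplus xy\|_0$, so multiplying through by $\binom{n}{k}2^n$ the claim becomes
\[
\Pr[K=k\,|\,P]\;\leq\;(n+1)^2\,\Pr[K=k\,|\,\tau]\qquad \forall k\in\{0,\ldots,n\},
\]
where $\Pr[K=k\,|\,\tau]=\frac{\binom{n}{k}}{2\omega^*-1}\int_{1-\omega^*}^{\omega^*} p^{n-k}(1-p)^k\,\der p$, and $K$ denotes the number of rounds in which the CHSH game was lost.

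I would split by the location of $k$. In the \emph{interior} regime $k/n \in [1-\omega^*,\omega^*]$, the integrand peaks at $p^\ast=(n-k)/n$ \emph{inside} the integration interval, and a Laplace/Stirling estimate gives $\Pr[K=k|\tau] = \Omega(1/n)$. Combined with the trivial $\Pr[K=k|P]\leq 1$ this yields the inequality immediately, with even a factor $(n+1)$ to spare. In the \emph{boundary} regime $k/n\notin[1-\omega^*,\omega^*]$, say $k<n(1-\omega^*)$, the integrand is monotone over the range and a boundary Laplace estimate gives $\Pr[K=k|\tau] = \Theta\bigl(\binom{n}{k}(\omega^*)^{n-k}(1-\omega^*)^k/\mathrm{poly}(n)\bigr)$, the rate of a Binomial$(n,1-\omega^*)$ tail. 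Here I would invoke Theorem~\ref{thm:threshold_unger} applied to $P$ (and to its output-flipped counterpart, to cover the symmetric case $k > n\omega^*$) to upper-bound $\Pr[K=k|P]$ by a matching exponentially-decaying quantity.

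The \emph{main obstacle} is this quantitative matching in the boundary regime: the $\tau$-marginal decays with a Chernoff/KL-type rate $D(k/n\,\|\,1-\omega^*)$, whereas a purely Hoeffding-type form of the threshold theorem decays only at the weaker Pinsker rate $2(k/n-(1-\omega^*))^2$, which would leave an exponential mismatch for $k$ far inside the forbidden region. Bridging this gap is exactly where the strength of the threshold theorem enters — consistent with the paper's remark that a de Finetti theorem of this form is essentially equivalent in difficulty to a threshold theorem. If the available threshold bound turns out not to be sharp enough in its raw form, a natural workaround is to augment $\tau$ with extra point masses at the extremal quantum winning probabilities $p=\omega^*$ and $p=1-\omega^*$, so that $\Pr[K=k\,|\,\tau]$ dominates the Binomial$(n,1-\omega^*)$ and Binomial$(n,\omega^*)$ tails directly. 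The polynomial factor $(n+1)^2$ decomposes naturally as one $(n+1)$ inherited from Theorem~\ref{thm:de_finetti_chsh_arf} together with one $(n+1)$ paid for restricting the mixing measure to the quantum range.
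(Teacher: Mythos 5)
Your plan is correct and matches the paper's proof in all essentials: you construct the same de Finetti box $\tau$ (the uniform mixture of $Q(p)^{\otimes n}$ over $p\in[1-w,w]$), reduce via CHSH symmetry to comparing the distributions of the win count $K$, and close the argument by combining the KL-rate threshold theorem with Beta-function/Stirling-type estimates on $\Pr[K=k\mid\tau]$. Your hedge at the end is unnecessary — Theorem~\ref{thm:threshold_unger} already carries the full relative-entropy exponent $D(k/n,1-k/n\,\|\,w,1-w)$, not merely a Pinsker/Hoeffding rate, so the quantitative matching in the boundary regime goes through with exactly a factor $(n+1)^2$; the paper in fact avoids your interior/boundary case split by writing the threshold bound uniformly as $p_k \le \sup_{p\in[1-w,w]} e^{-nD(\alpha,1-\alpha\|p,1-p)}$ and pairing it with the non-asymptotic lower bound $\tau(ab|xy)\ge \frac{1}{n+1}\sup_{p\in[1-w,w]}f(p)^n$ from Lemma~\ref{lem:chsh_concavity_argument}, which is the rigorous replacement for your heuristic Laplace estimates.
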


The maximal probability with which any single round quantum box can win the CHSH game is $w = \frac{2+\sqrt{2}}{4}$ \cite{Cirelson1980}. This value is called the quantum value of the CHSH game. To prove theorem \ref{thm:de_finetti_chsh_symmetric} we need the following specialization of a theorem from \cite{Unger2009} to the CHSH case.   It says that the probability that the fraction of won CHSH games is larger than a certain threshold (namely the value of the  CHSH game) is exponentially small in $n$. Such theorems are commonly referred to as \emph{threshold theorems}.\footnote{To be precise, theorem~\ref{thm:threshold_unger} is a ``perfect'' threshold theorem, in that the exponent in the bound~\eqref{eq:threshold_unger} is such that the bound is, up to a factor polynomial in $n$, equal to $\binom{n}{k}w^k(1-w)^{n-k}$, the probability to win exactly $k$ games if a single game is won with probability $w$. ``Imperfect'' threshold theorems can be roughly described as giving bounds of the more general form $e^{-n\Delta(k/n)}$, where $\Delta$ is some potentially ``looser'' way to quantify the distance from $k/n$ to $w$ \cite{ArnonFriedman2016}.
Our first de Finetti theorem (theorem~\ref{thm:de_finetti_chsh_symmetric}) and its generalization (theorem~\ref{thm:deFinetti_main}) both require a perfect threshold theorem. However, the proof of our second de Finetti theorem (theorem~\ref{thm:de_finetti_theorem_2}) still holds with an imperfect threshold theorem, though the resulting bound would be weaker.
}
\begin{thm}[Theorem 5 in \cite{Unger2009}]
\label{thm:threshold_unger}
Let $P_{AB|XY}$ be an $n$-round box with single round inputs and outputs in $\{0,1\}$. Let $\mu$ be the uniform probability distribution on $\{0,1\}^2$ and $K = ||A\oplus B \oplus XY \oplus \mathbf{1}||_0$ the number of won instances of the CHSH game. Let $w = \frac{2+\sqrt{2}}{4}$ the quantum value of the CHSH game. Then for $k > wn$
\[
\Pr_{P_{AB|XY},\mu^{\otimes n}}[K \geq k] \leq e^{-nD(k/n,1-k/n||w,1-w)},
\label{eq:threshold_unger}
\]
where $\Pr_{P_{AB|XY},\mu^{\otimes n}}$ denotes the probability measure in which $X$ and $Y$ are sampled from $\mu^{\otimes n}$ and $A$ and $B$ are sampled using $P_{AB|XY}$ and
\[
D(p,1-p||q,1-q) = p \ln\left(\frac{p}{q}\right) + (1-p) \ln\left(\frac{1-p}{1-q}\right)
\]
denotes the relative entropy.
\end{thm}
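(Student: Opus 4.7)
The plan is to prove the exponential tail bound via Chernoff's method applied to a sequential filtration, using a conditional version of Tsirelson's bound to control each round. Define $W_i = 1$ if the CHSH game is won in round $i$ (that is, $A_i \oplus B_i = X_i Y_i$) and $W_i = 0$ otherwise, so that $K = \sum_{i=1}^n W_i$, and let $\mathcal{F}_{i-1}$ denote the $\sigma$-algebra generated by $(X_j, Y_j, A_j, B_j)_{j < i}$ under the joint law in which inputs are drawn i.i.d.\ from $\mu$ and outputs from $P_{AB|XY}$. The core step is the per-round bound $E[W_i \mid \mathcal{F}_{i-1}] \leq w$ almost surely, which is the only place that the quantumness of $P_{AB|XY}$ enters.

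Given this bound, Chernoff's method finishes the proof. For any $t > 0$, $\Pr[K \geq k] \leq e^{-tk} E[e^{tK}]$, and the iterated tower property together with the identity $E[e^{tW_i} \mid \mathcal{F}_{i-1}] = 1 + (e^t - 1) E[W_i \mid \mathcal{F}_{i-1}] \leq 1 + w(e^t - 1)$ yields $E[e^{tK}] \leq (1 + w(e^t - 1))^n$. Choosing $e^t = \frac{(k/n)(1-w)}{(1 - k/n) w}$ (positive because $k/n > w$) and simplifying via the identity $\ln(1 + w(e^t - 1)) = \ln\frac{1 - w}{1 - k/n}$ collapses the resulting exponent to exactly $-n D(k/n, 1 - k/n \| w, 1 - w)$.

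To prove the per-round bound, I fix a realization $(x_{<i}, y_{<i}, a_{<i}, b_{<i})$ of the past and argue that the conditional distribution $P(A_i B_i \mid X_i Y_i, \text{past})$ is itself a single-round quantum box. Passing to a Naimark dilation of $(\rho_{AB}, \{E^{a,x}\}, \{F^{b,y}\})$, the length-$n$ POVMs can be written as a sequence of $n$ projective measurements per party; the projectors corresponding to rounds $1, \ldots, i-1$ prepare a normalized post-measurement state on the remaining systems, and the residual round-$i$ projectors, marginalized over the later rounds, act as valid POVMs on this state. Tsirelson's bound applied to the resulting effective single-round quantum box, averaged over uniform $X_i, Y_i$, delivers the required inequality.

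The main obstacle is precisely this conditional Tsirelson step. A general quantum realization of $P_{AB|XY}$ is a priori presented as a single joint POVM on length-$n$ strings with no explicit round structure, so one must first sequentialize it (via Naimark dilation, using the non-signaling from future inputs to past outputs that any quantum box automatically satisfies) before conditioning on the past is well defined. The statement is only meaningful for quantum boxes, since a PR box achieves $K = n$ deterministically, and indeed the proof uses quantumness only to invoke Tsirelson's bound at this step. Once the sequential decomposition is in place, the remainder is a standard Chernoff--Hoeffding computation, made tight (and thus yielding the full binary relative entropy rather than a Hoeffding-type Gaussian bound) by the $\{0,1\}$-valued structure of the $W_i$.
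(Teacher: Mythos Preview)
The paper does not prove this theorem; it is quoted as Theorem~5 of \cite{Unger2009} and used as a black box. So there is no in-paper proof to compare against, and your proposal has to stand on its own.

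There is a real gap at the sequentialization step. You assert that ``any quantum box automatically satisfies'' non-signaling from future inputs to past outputs, and that Naimark dilation turns Alice's length-$n$ POVM into a sequence of $n$ round-wise projective measurements. Neither holds for a \emph{parallel} $n$-round quantum box as defined here: the POVM element $E^{a,x}$ is indexed by the whole string $x\in\{0,1\}^n$, and nothing stops the marginal output $A_1$ from depending on $x_2,\dots,x_n$ --- take the classical strategy $A_1=X_2$. Naimark dilation makes each $\{E^{a,x}\}_a$ projective, and a $2^n$-outcome projective measurement can indeed be refined into $n$ successive binary projections, but the resulting ``round-$1$'' projector still depends on the full input $x$, not just on $x_1$. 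Consequently, conditioning on $(X_1,Y_1,A_1,B_1)$ can tilt the posterior distribution of $(X_2,Y_2)$ away from uniform (in the example it collapses to a point mass), Tsirelson's bound --- which is a statement about uniform inputs --- no longer controls the conditional round, and the key inequality $E[W_i\mid\mathcal F_{i-1}]\le w$ fails in general.

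Unger's actual argument does not use a round-by-round filtration. It exploits a feature specific to XOR games: for every subset $S\subseteq\{1,\dots,n\}$ the signed correlation obeys $\bigl|E\bigl[\prod_{i\in S}(2W_i-1)\bigr]\bigr|\le(2w-1)^{|S|}$, which is precisely perfect parallel repetition for quantum XOR games (the $|S|$-fold XOR is again an XOR game, and the bias is multiplicative). A purely probabilistic inequality then shows that this correlation condition alone forces the $\mathrm{Binom}(n,w)$ tail. Your Chernoff computation downstream of the per-round bound is correct; what must be replaced is the per-round bound itself, by this multiplicative-bias input. The sequential-martingale picture you sketch is the right one for time-ordered boxes (as in the EAT setting), not for the parallel boxes the theorem is about.
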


Note that the box $P_{AB|XY}$ in theorem \ref{thm:threshold_unger} does not have to be CHSH symmetric. However, if $P_{AB|XY}$ is CHSH symmetric then we can describe it completely by $n+1$ parameters $\{p_0,p_1,\dots,p_n\}$, which we define as follows: for each $k \in \{0,...,n\}$, take any $a,b,x,y \in \{0,1\}^n$ such that $k = ||a\oplus b \oplus xy \oplus \mathbf{1}||_0$ (in other words, $a,b,x,y$ win exactly $k$ instances of the CHSH game). Then define
\[
p_k = \binom{n}{k}2^n P(ab|xy).
\label{eq:prob_of_winning_k_games}
\]
By CHSH symmetry, all combinations of $a,b,x,y$ with the same value of $k$ have the same value of $P(ab|xy)$, so the expression~\eqref{eq:prob_of_winning_k_games} is indeed well-defined.
The normalization factor  $\binom{n}{k}2^n$ is chosen to give these parameters a simple interpretation: namely, $p_k$ is in fact equal to the probability of winning exactly $k$ CHSH games for the box $P(ab|xy)$ (regardless of the input distribution). 
To see this, notice that for fixed $x$, $y$ and $k$ there are exactly $\binom{n}{k}2^n$ pairs $a,b$ such that $k = ||a\oplus b \oplus xy \oplus 1||_0$ \cite{ArnonFriedman2015}. Therefore for any probability measure $\mu$ on the $n$-round inputs $x$ and $y$, we indeed have
\begin{align}
\Pr_{P_{AB|XY},\mu}[K=k] &= \sum_{\substack{a,b,x,y \\k = ||a\oplus b \oplus xy \oplus \mathbf{1}||_0}} P(ab|xy)\mu(xy) \nonumber \\
&=   \sum_{xy} \binom{n}{k}2^n  \frac{p_k}{\binom{n}{k}2^n} \mu(xy)\\  \nonumber
&= p_k
\end{align}
where in the second equality we used that the summand does not depend on $a$ and $b$, and for a fixed $x$ and $y$ there are $\binom{n}{k}2^n$ possible $a$ and $b$ with $\norm{a\oplus b\oplus xy\oplus \mathbf{1}}_0=k$.
Theorem \ref{thm:threshold_unger} then implies
\[
\sum_{l \geq k} p_l \leq e^{-nD(k/n,1-k/n||w,1-w)}.
\]

To prove theorem \ref{thm:de_finetti_chsh_symmetric} we need one further ingredient:
\begin{lem}
\label{lem:chsh_concavity_argument}
Let $a,b \in \Reals$ with $a < b$ and $f:[a,b]\rightarrow\Reals^+_0$ be a concave function that attains its maximum at some $x^* \in [a,b]$. Then $ \forall n \in\Integers$,
\[
\frac{1}{n+1}(b-a)f(x^*)^n \leq \int_a^b f(x)^n\der x \leq (b-a)f(x^*)^n
\]
\end{lem}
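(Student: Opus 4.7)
The upper bound is immediate: on $[a,b]$ we have $0 \le f(x) \le f(x^*)$, so $f(x)^n \le f(x^*)^n$ by monotonicity of $t \mapsto t^n$ on nonnegative reals, and integrating over $[a,b]$ yields $\int_a^b f(x)^n \der x \le (b-a) f(x^*)^n$.

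For the lower bound, my plan is to pinch $f$ from below by a tent function that vanishes at the endpoints and peaks at $x^*$, whose $n$-th power integrates explicitly. Concretely, I would define
\begin{equation*}
g(x) = \begin{cases} f(x^*)\,\dfrac{x-a}{x^*-a}, & x \in [a, x^*], \\[4pt] f(x^*)\,\dfrac{b-x}{b-x^*}, & x \in [x^*, b], \end{cases}
\end{equation*}
with the convention that the first (resp.\ second) piece is dropped if $x^* = a$ (resp.\ $x^* = b$). To see that $0 \le g(x) \le f(x)$ on $[a,b]$, I would apply concavity on each subinterval: on $[a,x^*]$, $f$ lies above its chord from $(a,f(a))$ to $(x^*, f(x^*))$, and since $f(a) \ge 0$ that chord itself lies above the line from $(a,0)$ to $(x^*, f(x^*))$, which is exactly $g$ on that piece; the argument on $[x^*,b]$ is symmetric.

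Then $f^n \ge g^n$ pointwise on $[a,b]$, so it suffices to integrate $g^n$ directly. With a linear change of variables on each piece,
\begin{equation*}
\int_a^{x^*} g(x)^n \der x = \frac{(x^*-a)\,f(x^*)^n}{n+1}, \qquad \int_{x^*}^b g(x)^n \der x = \frac{(b-x^*)\,f(x^*)^n}{n+1},
\end{equation*}
so summing gives $\int_a^b g(x)^n \der x = \frac{1}{n+1}(b-a) f(x^*)^n$, which combined with $f^n \ge g^n$ yields the lower bound.

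There is no serious obstacle to this approach; the only subtlety worth flagging is that the nonnegativity hypothesis on $f$ is essential --- it is what allows replacing the chord through $(a, f(a))$ by the line through $(a, 0)$ in the tent construction, and without it the lower bound can fail. The constant $\frac{1}{n+1}$ is tight, as is seen by taking $f$ to be exactly this tent function, in which case $f^n = g^n$ and both sides of the lemma's inequality coincide.
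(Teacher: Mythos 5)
Your proof is correct and follows essentially the same route as the paper's: both bound $f$ from below by the piecewise-linear ``tent'' function that vanishes at the endpoints and equals $f(x^*)$ at $x^*$, using concavity together with $f(a),f(b)\geq 0$, and then integrate the $n$-th power of that tent explicitly. The only (cosmetic) difference is that you name the tent function $g$ and handle the degenerate cases $x^*=a$ or $x^*=b$ explicitly, whereas the paper writes out the concavity inequality directly on each subinterval.
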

The proof is given in appendix~\ref{sec:proof_concavity_lemma}. Now we are ready to prove theorem \ref{thm:de_finetti_chsh_symmetric}:
\begin{proof}[Proof of theorem \ref{thm:de_finetti_chsh_symmetric}]
Let $w = \frac{2+\sqrt{2}}{4}$ be the quantum value of the CHSH game and define for $p\in[1-w,w]$ the single round box $Q(p)_{\hat{A}\hat{B}|\hat{X}\hat{Y}}$ as
\[
Q(p)(ab|xy) = \begin{cases} p/2 &\text{ if } a\oplus b = xy \\ (1-p)/2 &\text{ if } a\oplus b \neq xy \end{cases}
\]
Now set
\[
\tau_{AB|XY} = \frac{1}{2w-1}\int_{1-w}^w Q(p)_{AB|XY}^{\otimes n} \der p.
\]
$\tau_{AB|XY}$ is quantum because each $p \in [1-w,w]$ $Q(p)_{AB|XY}^{\otimes n}$ is quantum. Further $\tau_{AB|XY}$  is de Finetti by construction. Let $a,b,x,y \in \{0,1\}^n$ and let $k=\norm{a\oplus b \oplus xy \oplus \mathbf{1}}_0$ be the number of won CHSH games of $a,b,x,y$. Let $\alpha = k/n$ and set
\[
f(p) = \frac{1}{2} p^\alpha(1-p)^{1-\alpha}.
\]
Then 
\[
\tau(ab|xy) = \frac{1}{2w-1} \int_{1-w}^w 2^{-n}p^k(1-p)^{n-k} \der p = \frac{1}{2w-1}\int_{1-w}^w f(p)^n\der p.
\label{eq:chsh_tau_formula}
\]
Note that
\[
f'(p) = \left(\frac{\alpha}{p}-\frac{1-\alpha}{1-p}\right)f(p)
\]
and
\begin{align}
f''(p) &= \left[\left(\frac{\alpha}{p}-\frac{1-\alpha}{1-p}\right)^2-\frac{\alpha}{p^2}-\frac{1-\alpha}{(1-p)^2}\right]f(p) \nonumber \\
&= -\frac{\alpha(1-\alpha)}{p^2(1-p)^2} f(p) < 0.
\end{align}
Therefore, $f$ is concave and its maximum on the interval $[0,1]$ occurs at $p=\alpha$.

Since $f$ is concave, we can apply lemma \ref{lem:chsh_concavity_argument} to eq.~\eqref{eq:chsh_tau_formula} and get
\[
\tau(ab|xy) \geq \frac{1}{n+1}\sup_{p\in[1-w,w]} f(p)^n.
\label{eq:chsh_tau_bound}
\]

Now we turn to the box $P_{AB|XY}$. Following the earlier notation, let $p_k$ denote the probability of winning exactly $k$ CHSH games with this distribution. We observe that
\begin{itemize}
\item If $\alpha > w$ then, the threshold theorem \ref{thm:threshold_unger} implies
\[
p_k \leq e^{-nD(\alpha,1-\alpha||w,1-w)}. 
\label{eq:p_bound1}
\]
\item If $\alpha < 1-w$ we can use the threshold theorem to get a bound on the minimal number of won games, because the CHSH game has the property that winning exactly $k$ games is just as hard as losing exactly $k$ games (and thus winning $n-k$ games). Hence we have
\[
p_k \leq e^{-nD(\alpha,1-\alpha||1-w,w)}.
\label{eq:p_bound2}
\]
\item If $\alpha \in [1-w,w]$ we can rewrite the trivial bound $p_k \leq 1$ in the form
\[
p_k \leq 1 = e^{-nD(\alpha,1-\alpha||\alpha,1-\alpha)}.
\label{eq:p_bound3}
\]
\end{itemize}
Hence we can summarize the implications of the threshold theorem as
\[
p_k \leq \sup_{p \in [1-w,w]} e^{-nD(\alpha,1-\alpha||p,1-p)}.
\label{eq:thresh_summary}
\]
We can simplify the term in the supremum by inserting the definition of relative entropy:
\[
e^{-nD(\alpha,1-\alpha||p,1-p)} = \left(\frac{p}{\alpha}\right)^{\alpha n} \left(\frac{1-p}{1-\alpha}\right)^{(1-\alpha) n} = \frac{f(p)^n}{f(\alpha)^n}.
\label{eq:simplify_entropy}
\]
Now recall that by eq.~\eqref{eq:prob_of_winning_k_games}, $P(ab|xy)$ is related to $p_k$ by
\[
P(ab|xy) = \frac{p_k}{2^n\binom{n}{k}}.
\label{eq:chsh_box_from_probs}
\]
It is a well known identity of the Beta function that
\[
\binom{n}{k}^{-1} = (n+1)\int_0^1 t^k(1-t)^{n-k} \der t = 2^n (n+1)\int_0^1 f(p)^n \der p \leq 2^n (n+1) f(\alpha)^n
\label{eq:chsh_beta_function_identity}
\]
where for the last inequality we used lemma \ref{lem:chsh_concavity_argument} and the fact that the maximum of $f$ on $[0,1]$ is $f(\alpha)$. Inserting eq.~\eqref{eq:chsh_beta_function_identity} followed by eq.~\eqref{eq:thresh_summary}--\eqref{eq:simplify_entropy} into eq.~\eqref{eq:chsh_box_from_probs} gives
\[
P(ab|xy) \leq (n+1)p_kf(\alpha)^n \leq (n+1) \sup_{p \in [1-w,w]} f(p)^n.
\label{eq:chsh_P_bound}
\]
Combining eq.~\eqref{eq:chsh_tau_bound} and eq.~\eqref{eq:chsh_P_bound} yields $P(ab|xy) \leq (n+1)^2\tau(ab|xy)$, as desired.
\end{proof}

The arguments in the proof of theorem \ref{thm:de_finetti_chsh_symmetric} are not specific to CHSH symmetry. In fact, in appendix~\ref{sec:a_de_Finetti_theorem_for_general_symmetries} we show that we get such a de Finetti theorem whenever a threshold theorem analogous to theorem \ref{thm:threshold_unger} holds.

\subsection{The second de Finetti theorem}
\label{subsec:second_de_finetti_theorem}
The de Finetti theorem discussed in the previous section is similar to the de Finetti theorems for boxes shown in \cite{ArnonFriedman2015}; they show that the entries of some given box are \emph{upper bounded} by the entries of a de Finetti box. The original de Finetti theorems for random variables  and quantum states are of a different flavor: They show that the marginal on the first $k$ rounds of an arbitrary $n$-round permutation invariant state is \emph{close} to a de Finetti state if $k \ll n$. In this section we show a theorem of this type for CHSH symmetric boxes. We use the following distance measure on the space of boxes:
\begin{mydef}
Let $P_{A|X}$ and $Q_{A|X}$ be two boxes with the same input set $\X$ and output set $\A$. Their distance is
\[
\norm{P_{A|X}-Q_{A|X}} = \max_{x \in \X} \sum_{a \in \A} |P(a|x)-Q(a|x)|.
\]
\end{mydef}
This distance is just the $\ell^1$ distance of the probability distributions of $a$, maximized over the input $x$. \edit{}{To state the de Finetti theorem we need to introduce the notion of the marginal of an $n$-round box. In general, this marginal may not be well-defined without some kind of no-signaling condition across different rounds (since otherwise the output distribution of one round could potentially depend on the input in another round). However, it turns out that for CHSH symmetric boxes this is indeed well-defined, as we now show.
\begin{lem}
Let $P_{AB|XY}$ be an $n$-round CHSH symmetric quantum box and let $1 \leq k \leq n$ be an integer. 
Then the expression 
\[
P^k(a_1...a_k,b_1...b_k|x_1...x_k,y_1...y_k) \coloneqq \sum_{\substack{a_{k+1}...a_n\\b_{k+1}...b_n}}P(a_1...a_n,b_1...b_n|x_1...x_n,y_1...y_n)\label{eq:marginal}
\]
is independent of the choice of $x_{k+1}...x_{n}$ and $y_{k+1}...y_{n}$, and we shall refer to it as the \emph{marginal} of the first $k$ rounds. Furthermore, $P^k_{AB|XY}$ is a CHSH symmetric quantum box (of $k$ rounds).
\end{lem}
\begin{proof}
We shall use the notation $a=(a_1...a_k)$ and $a' = (a_{k+1}...a_{n})$, and define $b,b',x,x',y,y'$ analogously. To see that $P^k_{AB|XY}$ is independent of the choice of $x'$ and $y'$ we calculate
\begin{align}
P^k(ab|xy) &= \sum_{a'b'} P(aa',bb'|xx',yy') \nonumber\\ 
		   &= \sum_{a'b'} P(aa',b(b'\oplus x'y')|x0,y0) \nonumber\\ 
		   &= \sum_{a'b'} P(aa',bb'|x0,y0),
\end{align}
where in the second equality we used the CHSH symmetry of $P_{AB|XY}$ and in the third equality we shifted the summation variable $b'$ by $x'y'$. 

To see the CHSH symmetry of $P^k_{AB|XY}$, note that by CHSH symmetry of $P_{AB|XY}$,  $P(aa',bb'|x0,y0)$ only depends on $a\oplus b\oplus xy$ and $a' \oplus b'$. Hence $P^k(ab|xy)$ only depends on $a\oplus b\oplus xy$. Furthermore, the permutation invariance of $P_{AB|XY}$ immediately implies that $P^k_{AB|XY}$ is also permutation invariant, and hence we conclude that $P^k_{AB|XY}$ is CHSH symmetric. Finally, the fact that $P^k_{AB|XY}$ is a quantum box immediately follows from the fact that $P_{AB|XY}$ is quantum.
\end{proof}
}

Now we can state the de Finetti theorem:
\begin{thm}
\label{thm:de_finetti_theorem_2}
Let $P_{AB|XY}$ be an $n$-round CHSH symmetric quantum box \edit{. Denote by $P^{k}_{AB|XY}$ the marginal of the first $k$ rounds.}{and let $P^{k}_{AB|XY}$  be the marginal of the first $k$ rounds as defined in eq.~\eqref{eq:marginal}.} There is a $k$-round CHSH symmetric quantum de Finetti box $\tau_{AB|XY}$ such that
\[
\norm{P^k_{AB|XY}-\tau_{AB|XY}} \leq \left(C\sqrt{\ln(n/k)}+4\right) \sqrt{\frac{k}{n}} + \frac{4k}{n} = \mathcal{O}\left(\sqrt{\ln(n/k)\frac{k}{n}}\right)
\]
with $C = \frac{2}{\sqrt{2-\sqrt{2}}}\approx 2.6$.
\end{thm}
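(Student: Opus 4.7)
My plan is to exploit the fact that an $n$-round CHSH-symmetric box is fully described by its win-count distribution $(p_0,\dots,p_n)$. Marginalising over the last $n-k$ rounds, the win count of $P^k_{AB|XY}$ is the hypergeometric mixture
\[
p'_j=\sum_{m=0}^n p_m \,\binom{m}{j}\binom{n-m}{k-j}\Big/\binom{n}{k},
\]
and a short calculation starting from eq.~\eqref{eq:prob_of_winning_k_games} shows that for any CHSH-symmetric $\tau$ with win-count distribution $\tau'$ one has $\norm{P^k_{AB|XY}-\tau_{AB|XY}} = \sum_{j=0}^k |p'_j-\tau'_j|$ independently of the inputs. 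The task therefore reduces to approximating this hypergeometric mixture in total variation by a mixture of binomials $\mathrm{Bin}(k,p)$ with $p\in[1-w,w]$, since such mixtures are precisely the win-count distributions of CHSH-symmetric quantum de Finetti boxes built from the single-round boxes $Q(p)$ used in Theorem~\ref{thm:de_finetti_chsh_symmetric}.

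A natural candidate is $\tau_{AB|XY}=\sum_m p_m\, Q(\pi(m/n))^{\otimes k}$, where $\pi(t)=\max(1-w,\min(w,t))$ clips into the quantum-admissible range; each summand is then a legitimate iid quantum box, so $\tau$ is a CHSH-symmetric quantum de Finetti box. By the triangle inequality,
\[
\sum_j |p'_j-\tau'_j|\,\le\,\sum_{m=0}^n p_m\, d_{\mathrm{TV}}\bigl(\mathrm{Hyp}(n,m,k),\,\mathrm{Bin}(k,\pi(m/n))\bigr).
\]

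I then split the sum over $m$ using a buffer width $\delta>0$ into three regimes. In the \emph{central} regime $m/n\in[1-w,w]$, no clipping occurs and a standard coupling of sampling with versus without replacement yields $d_{\mathrm{TV}}(\mathrm{Hyp}(n,m,k),\mathrm{Bin}(k,m/n))=O(k/n)$, producing the $4k/n$ term of the theorem. In the \emph{buffer} regime $m/n\in[w,w+\delta]$ (and its mirror near $1-w$), the clipping shifts the binomial parameter by at most $\delta$, and Pinsker's inequality combined with the curvature $w(1-w)=1/8$ of the Bernoulli relative entropy near $w$ gives $d_{\mathrm{TV}}(\mathrm{Bin}(k,m/n),\mathrm{Bin}(k,w))\le (C/\sqrt{2})\sqrt{k}\,\delta$, with the prefactor matching $C=2/\sqrt{2-\sqrt{2}}$. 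In the \emph{tail} regime $|m/n-w|>\delta$ (and its mirror), Theorem~\ref{thm:threshold_unger}---applied on both sides, using the symmetry of the CHSH game under flipping a single output---bounds the total weight by $2e^{-nD(w+\delta,\,1-w-\delta\,||\,w,\,1-w)}\le 2e^{-4n\delta^2}$ via the quadratic lower bound on $D$.

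Finally, I balance the three contributions by choosing $\delta=\frac{1}{2}\sqrt{\ln(n/k)/n}$. This makes the tail weight at most $2k/n$ and the buffer contribution at most $\sim C\sqrt{(k/n)\ln(n/k)}$, which combined with the central $4k/n$ term yields the dominant part of the advertised bound. I expect the main obstacle to be the bookkeeping of constants: promoting the local Pinsker estimate at $w$ to a non-asymptotic inequality valid over the full buffer width $\delta$, combining it with the somewhat slack threshold-theorem tail to extract exactly $C=2/\sqrt{2-\sqrt{2}}$, and absorbing the residual $\sqrt{k/n}$ contributions arising from the interactions between the three regimes into the explicit $+4\sqrt{k/n}$ and $+4k/n$ terms.
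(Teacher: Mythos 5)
Your proposal follows the paper's proof essentially step for step: you construct the same clipped de Finetti box $\tau=\sum_m p_m\,Q(\pi(m/n))^{\otimes k}$, use the same three-regime split with buffer width $\delta$, and make the same balancing choice $\delta=\tfrac12\sqrt{\ln(n/k)/n}$; your central-regime hypergeometric-vs-binomial coupling is precisely what the cited Diaconis--Freedman bound of $4k/n$ gives, so you are rederiving rather than replacing that ingredient. Two execution points are worth flagging. First, the ``local curvature plus Pinsker'' estimate in the buffer is the right idea but, as you anticipate, must be made non-asymptotic; the paper does this via a reverse Pinsker inequality $D(P_0\|Q_0)\le \norm{P_0-Q_0}_1^2/\bigl(2\min\{q,1-q\}\bigr)$, which combined with ordinary Pinsker gives $\norm{\mathrm{Bin}(k,p)-\mathrm{Bin}(k,q)}_1\le 2\sqrt{k/\min\{q,1-q\}}\,|p-q|=2C\sqrt{k}\,|p-q|$, not your $(C/\sqrt{2})\sqrt{k}\,\delta$. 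Second, Pinsker applied to the tail yields $D(w+\delta,1-w-\delta\,\|\,w,1-w)\ge 2\delta^2$, so the tail probability is $\le e^{-2n\delta^2}=\sqrt{k/n}$ with your $\delta$, not $e^{-4n\delta^2}=k/n$; multiplying by the trivial distance bound of $2$ on each of the two tails then produces exactly the $4\sqrt{k/n}$ term in the statement, rather than the $2k/n$ you report.
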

For the proof of theorem \ref{thm:de_finetti_theorem_2} we first note that the distance between two CHSH symmetric boxes is just the $\ell^1$ distance between the distributions of the wins and losses of the CHSH game, which are independent from the input into the box.
\begin{lem}
\label{lem:chsh_distance}
Let $P_{AB|XY}$ and $Q_{AB|XY}$ be CHSH symmetric $n$-round boxes and $W = A\oplus B \oplus XY \oplus \mathbf{1} \in \{0,1\}^n$ be the random variable that indicates in which rounds the CHSH game was won. Let $P_W$ and $Q_W$ the distribution of $W$. Then
\[
\norm{P_{AB|XY}-Q_{AB|XY}} = \norm{P_W-Q_W}_1.
\]
\end{lem}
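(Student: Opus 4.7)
The plan is to exploit CHSH symmetry to show that, for any fixed input $xy$, the sum $\sum_{ab}|P(ab|xy)-Q(ab|xy)|$ is already independent of $xy$ and equals $\norm{P_W-Q_W}_1$; this immediately renders the maximum over $x,y$ trivial and yields the claim.

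First, I would recall from the paragraph following eq.~\eqref{eq:prob_of_winning_k_games} that for a CHSH symmetric box $P_{AB|XY}$ the entry $P(ab|xy)$ only depends on the win-count $k=\norm{a\oplus b\oplus xy\oplus\mathbf{1}}_0$, and one may write $P(ab|xy)=p_k/(2^n\binom{n}{k})$ where $p_k$ is the probability of winning exactly $k$ rounds. Introduce $q_k$ analogously for $Q_{AB|XY}$. Next I would determine the distribution $P_W$. For any fixed $xy$, the map $(a,b)\mapsto(w,b)$ with $w=a\oplus b\oplus xy\oplus\mathbf{1}$ is a bijection on $\{0,1\}^n\times\{0,1\}^n$, so summing over $b$ (which contributes a factor $2^n$) gives $P(W=w\mid XY=xy)=p_k/\binom{n}{k}$ whenever $\norm{w}_0=k$. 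Crucially this is independent of $xy$, so $P_W(w)=p_k/\binom{n}{k}$ regardless of the input distribution, and similarly for $Q_W$.

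Now I would compute the left-hand side. Fix $xy$ and apply the same bijection:
\begin{align}
\sum_{a,b\in\{0,1\}^n}\bigl|P(ab|xy)-Q(ab|xy)\bigr|
&= \sum_{w\in\{0,1\}^n}\sum_{b\in\{0,1\}^n}\left|\frac{p_{\norm{w}_0}-q_{\norm{w}_0}}{2^n\binom{n}{\norm{w}_0}}\right|\\
&= \sum_{w\in\{0,1\}^n}\frac{|p_{\norm{w}_0}-q_{\norm{w}_0}|}{\binom{n}{\norm{w}_0}}\\
&= \sum_{w\in\{0,1\}^n}|P_W(w)-Q_W(w)|=\norm{P_W-Q_W}_1.
\end{align}
Since the right-hand side does not depend on $xy$, taking the maximum over $xy$ in the definition of $\norm{P_{AB|XY}-Q_{AB|XY}}$ preserves the equality, completing the proof.

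The argument is essentially bookkeeping — there is no genuine obstacle — but the one place to be careful is verifying that the $(a,b)\mapsto(w,b)$ change of variables is a bijection for each fixed $xy$, which ensures the factor $2^n$ in the numerator cancels exactly with the $2^n$ in the normalization of $P(ab|xy)$. This cancellation is what makes $\norm{P_{AB|XY}-Q_{AB|XY}}$ coincide with $\norm{P_W-Q_W}_1$ rather than a rescaled version thereof.
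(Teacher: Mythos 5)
Your proof is correct and follows essentially the same route as the paper's: both use the CHSH-symmetry identity $P(ab|xy)=2^{-n}P_W(a\oplus b\oplus xy\oplus\mathbf{1})$ together with a change of variables from $(a,b)$ to $(w,b)$ at fixed $xy$, so that the $2^n$ from the sum cancels the $2^{-n}$ normalization. You are slightly more explicit than the paper in deriving $P_W(w)=p_{\norm{w}_0}/\binom{n}{\norm{w}_0}$ and noting its independence from the input distribution, but the substance of the argument is identical.
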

\begin{proof}
For all $x,y$
\[
P(ab|xy) = 2^{-n}P_W(a\oplus b\oplus xy \oplus \mathbf{1})
\]
so
\[
\sum_{a,b}|P(ab|xy)-Q(ab|xy)| = \sum_{a,w}2^{-n}|P_W(w)-Q_W(w)| = \norm{P_W-Q_W}_1.
\]
\end{proof}

Another ingredient for the proof of theorem \ref{thm:de_finetti_theorem_2} is a bound on the $\ell^1$-distance between two binomial distributions:
\begin{lem}
\label{lem:binom_continuity}
Let $k \in \Integers$ and $p,q \in (0,1)$. Denote by $P=\mathrm{Binom}(k,p)$ and $Q = \mathrm{Binom}(k,q)$ the binomial distributions with $k$ trials and success probabilities $p$ and $q$. Then
\[
\norm{P-Q}_1 \leq 2\sqrt{\frac{n}{\min\{q,1-q\}}} |p-q|.
\]
\end{lem}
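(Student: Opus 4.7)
My plan is to invoke Pinsker's inequality, which for any two distributions on the same space reads $\|P-Q\|_1 \leq \sqrt{2D(P||Q)}$, and reduce bounding the $\ell^1$ distance to controlling the relative entropy between the two binomials. Because $\mathrm{Binom}(k,p)$ is the pushforward of the product Bernoulli distribution $\mathrm{Bern}(p)^{\otimes k}$ under the sum-of-coordinates map, and since relative entropy is contractive under classical channels and additive under tensor products, I obtain $D(P||Q) \leq k\cdot D(\mathrm{Bern}(p)||\mathrm{Bern}(q))$, reducing the problem to a single Bernoulli trial.

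For the single-trial relative entropy, I would bound both logarithms using $\ln(1+x)\le x$:
$$D(\mathrm{Bern}(p)||\mathrm{Bern}(q)) = p\ln\frac{p}{q} + (1-p)\ln\frac{1-p}{1-q} \leq p\cdot\frac{p-q}{q} + (1-p)\cdot\frac{q-p}{1-q} = \frac{(p-q)^2}{q(1-q)}.$$
Chaining Pinsker's inequality with the tensorization and this bound yields $\|P-Q\|_1 \leq |p-q|\sqrt{2k/(q(1-q))}$. The final step is to trade $q(1-q)$ for $\min\{q,1-q\}$: since $\max\{q,1-q\}\ge 1/2$, we have $q(1-q) = \min\{q,1-q\}\cdot\max\{q,1-q\} \geq \min\{q,1-q\}/2$, which absorbs the $\sqrt{2}$ into the overall constant of $2$ and gives the stated inequality (reading the $n$ in the theorem statement as $k$, which must be a typo).

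No step here is particularly delicate; Pinsker, the tensorization/data-processing properties of relative entropy, and $\ln(1+x)\le x$ are all textbook results, and the constants line up exactly with the advertised factor of $2$. The one subtlety is using the convention of Pinsker that yields $\sqrt{2D}$ for the $\ell^1$ distance (rather than $\sqrt{D/2}$ for the total variation distance). If one preferred a Pinsker-free approach, one could instead write $P-Q=\int_q^p \partial_r P_r\,dr$ with $P_r=\mathrm{Binom}(k,r)$, observe that $\sum_j |\partial_r P_r(j)| = E_{P_r}|J-kr|/(r(1-r)) \leq \sqrt{k/(r(1-r))}$ by the variance bound on the binomial mean deviation, and integrate; this produces a bound of essentially the same shape after estimating the arcsine-type integral, but the Pinsker route is shorter.
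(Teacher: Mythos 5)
Your proof is correct and takes essentially the same route as the paper's: Pinsker's inequality, then additivity/data-processing to reduce to a single Bernoulli trial, then a bound on the single-trial relative entropy in terms of $(p-q)^2$. The only (minor) difference is that the paper invokes a reverse Pinsker inequality from \cite{Goetze2019} as a black box for the single-trial step, whereas you derive the bound $D(\mathrm{Bern}(p)\Vert\mathrm{Bern}(q)) \le (p-q)^2/(q(1-q))$ directly from $\ln(1+x)\le x$ (equivalently, a $\chi^2$ bound); both yield the same constant once $q(1-q) \ge \min\{q,1-q\}/2$ is used. You are also right that the $n$ in the lemma statement is a typo for $k$.
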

\begin{proof}
Denote by $P_0$ and $Q_0$ the Bernoulli distributions with success probability $p$ and $q$ respectively. We use Pinsker's inequality and the reverse Pinsker's inequality (Lemma 4.1 in \cite{Goetze2019}) to calculate
\begin{align}
\norm{P-Q}_1 &\leq \sqrt{2D(P||Q)} & \qquad \text{ Pinsker's inequality}\nonumber \\
& = \sqrt{2nD(P_0||Q_0)} & \qquad \text{ additivity of relative entropy} \nonumber \\
&\leq \sqrt{\frac{n\norm{P_0-Q_0}_1^2}{\min\{q,1-q\}}} & \qquad \text{ reverse Pinsker's inequality}\nonumber \\
&= 2\sqrt{\frac{n}{\min\{q,1-q\}}} |p-q|.
\end{align}
\end{proof}
Now we are ready to prove the de Finetti theorem:
\begin{proof}[Proof of theorem \ref{thm:de_finetti_theorem_2}]
Denote by $p_N$ the probability that Alice and Bob win exactly $N$ CHSH games on the box $P_{AB|XY}$. Then $W=A\oplus B \oplus XY \oplus \mathbf{1}$ has a permutation invariant distribution $P_W$ with
\[
\sum_{w: \sum_i w_i = N} P_W(w) = p_N.
\]
By the de Finetti theorem for random variables \cite{Diaconis1980} we have
\[
\norm{ P^k_W - \sum_{N=0}^n p_N \mathrm{Binom}\left(k,\frac{N}{n}\right)} \leq \frac{4k}{n}.
\label{eq:diaconis_freedman}
\]
where $P^k_W$ denotes the distribution of the first $k$ bits of $W$.

For $p \in [0,1]$ denote by \editB{}{$Q(p)_{\hat{A}\hat{B}|\hat{X}\hat{Y}}$} the single-round box with CHSH winning probability $p$. Then by lemma \ref{lem:chsh_distance} and eq.~\eqref{eq:diaconis_freedman}
\[
\norm{P^k_{AB|XY} - \sum_{N=0}^n p_N Q\left(\frac{N}{n}\right)^{\otimes k} } \leq \frac{4k}{n}
\label{bound1}
\]
The box $\sum_{N=0}^n p_N Q\left(\frac{N}{n}\right)^{\otimes k}$ is CHSH symmetric and de Finetti, but not quantum. The problems are the terms with $N > nw$ and $N < n(1-w)$, where $w = \frac{2+\sqrt{2}}{4}$.
We define a quantum CHSH symmetric de Finetti box as
\[
\tau_{AB|XY} = \sum_{N=0}^n p_N \begin{cases} Q(w)^{\otimes k} & \text{ if } N > wn \\ Q\left(\frac{N}{n}\right) ^{\otimes k}& \text{ if } N \in [(1-w)n, wn] \\ Q(1-w)^{\otimes k} & \text{ if } N < (1-w) n\end{cases}.
\]

We will show
\[
\norm{\tau_{AB|XY} - \sum_{N=0}^n p_N Q\left(\frac{N}{n}\right)^{\otimes k}} \leq \left(C\sqrt{\ln(n/k)}+4\right) \sqrt{\frac{k}{n}}.
\label{bound2}
\]
The statement of theorem \ref{thm:de_finetti_theorem_2} then follows by combining this bound and the bound in eq.~\eqref{bound1} using the triangle inequality.

Let $\delta>0$. We split the sum in the definition of $\tau_{AB|XY}$ to obtain
\begin{align}
\norm{\tau_{AB|XY} - \sum_{N=0}^n p_N Q\left(\frac{N}{n}\right)^{\otimes k}} & \leq \sum_{N \in [wn,n]} p_N \norm{Q(w)^{\otimes k} - Q\left(\frac{N}{n}\right)^{\otimes k}} \nonumber \\
&+   \sum_{N \in [0,(1-w)n]} p_N \norm{Q(1-w)^{\otimes k} - Q\left(\frac{N}{n}\right)^{\otimes k}}\\\nonumber
&\leq \sum_{N \in [wn, (w+\delta)n]} p_N \norm{Q(w)^{\otimes k} -Q(w+\delta)^{\otimes k}}\\\nonumber
&+ \sum_{N \in [(1-w-\delta)n, (1-w)n]} p_N \norm{Q(1-w)^{\otimes k} -Q(1-w-\delta)^{\otimes k}}\\\nonumber
&+2\sum_{N\in[0,(1-w-\delta)n]\cup[(w+\delta)n,n]} p_N
\end{align}
where in the last line we used $\norm{P_{AB|XY}-Q_{AB|XY}} \leq 2$ for all normalized boxes $P_{AB|XY}$ and $Q_{AB|XY}$. 
We start by bounding the terms with $N \in [wn, (w+\delta)n]$ and $N \in [(1-w-\delta)n, (1-w)n]$ using lemma \ref{lem:binom_continuity}:
\begin{align}
&\norm{Q(w)^{\otimes k} -Q(w+\delta)^{\otimes k}}\leq \frac{2}{\sqrt{1-w}}\sqrt{k}\delta = 2C\sqrt{k}\delta
\end{align}
where we used $\frac{2}{\sqrt{1-w}} = \frac{4}{\sqrt{2-\sqrt{2}}}=2C$.
Analogously we find
\[
\norm{Q(1-w)^{\otimes k} -Q(1-w-\delta)^{\otimes k}} \leq 2C\sqrt{k}\delta
\]
so that by $\sum_N p_N =1$
 \begin{align}
  &\sum_{N \in [wn, (w+\delta)n]} p_N \norm{Q(w)^{\otimes k} -Q(w+\delta)^{\otimes k}}\nonumber \\
  &+ \sum_{N \in [(1-w-\delta)n, (1-w)n]} p_N \norm{Q(1-w)^{\otimes k} -Q(1-w-\delta)^{\otimes k}}\nonumber \\
  &\leq 2C\sqrt{k}\delta.
 \end{align}

Now we turn to the terms with $N>(w+\delta)n$ and $N<(1-w-\delta)n$. By the threshold theorem for the CHSH game (theorem \ref{thm:threshold_unger}) it holds that
\begin{align}
\sum_{N > (w+\delta)n}p_N &\leq e^{-nD(w+\delta, 1-w-\delta||w,1-w)}\\ 
& \leq e^{-2n\delta^2}
\end{align}
where in the last step we used Pinsker's inequality, i.e $D(p,1-p||q,1-q) \geq 2 |p-q|^2$.
Analogously also
\[
\sum_{N<(1-w-\delta)n}p_N \leq e^{-2n\delta^2}.
\]
Putting together the bounds for $N \in [wn, (w+\delta)n]$ and $N>(w+\delta)n$ we find
\begin{align}
\norm{\tau_{AB|XY} - \sum_{N=0}^n p_N Q\left(\frac{N}{n}\right)^{\otimes k}}  &\leq 2C\sqrt{k}\delta+ 4e^{-2n\delta^2}.
\end{align}
Now we choose 
\[
\delta = \frac{1}{2}\sqrt{\frac{\ln(n/k)}{n}}.
\label{eq:choice_delta}
\]
and obtain
\[
\norm{\tau_{AB|XY} - \sum_{N=0}^n p_N Q\left(\frac{N}{n}\right)^{\otimes k}} \leq \left(C\sqrt{\ln(n/k)}+4\right) \sqrt{\frac{k}{n}}.
\]
This completes the proof.
\end{proof}
The choice of $\delta$ in eq.~\eqref{eq:choice_delta} is not optimal, it does not give the minimal possible error term in theorem \ref{thm:de_finetti_theorem_2}. However, the improvement that can be achieved by choosing $\delta$ optimally does not change the $\mathcal{O}\left(\sqrt{\ln(n/k) k/n}\right)$ behavior. To see this, choose 
\[
\delta = \frac{1/2\sqrt{\ln(n/k)}+\beta}{\sqrt{n}}
\]  
for some $\beta > -\sqrt{\ln(n/k)}/2$. Then the error term is given by
\[
\epsilon \coloneqq 2C\sqrt{k}\delta+ 4e^{-2n\delta^2} = \left( C\sqrt{\ln(n/k)} + 2C\beta +4e^{-2\beta^2-2\beta\sqrt{\ln(n/k)}}\right) \sqrt{k/n}.
\]
The optimal $\beta$ is such that $C' = 2C\beta +4e^{-2\beta^2-2\beta\sqrt{\ln(n/k)}}$ is minimized. A numerical optimization indicates that for $\ln(n/k)=0$ the minimum of $C'$ is achieved at $\beta = 0$, so $C'=4$. As $\ln(n/k)$ increases the minimum of $C'$ decreases slowly, at $\ln(n/k) = 10$ it is given by $C' \approx 2.03$, and at $\ln(n/k)=100$ by $C' \approx 0.96$. As $\ln(n/k) \rightarrow \infty$ it converges $C' \rightarrow 0$, which can be seen by choosing $\beta = \left(\ln(n/k)\right)^{-1/4}$. Regardless of the choice of $\beta$ the error is always at least $C\sqrt{\ln(n/k)k/n}$. 

\section{\edit{}{Applications}}
\label{sec:applications}

In this section we show how our first de Finetti theorem (theorem \ref{thm:de_finetti_chsh_symmetric}) has applications in DIQKD security proofs, \edit{}{by first using it to derive a bound on channel distinguishability, then discussing its implications for security proofs}. This result, and the proof of it, are analogous to theorem 25 in \cite{ArnonFriedman2015}, except that we use theorem \ref{thm:de_finetti_chsh_symmetric} as the de Finetti theorem, rather than the statement in eq.~\eqref{eq:de_finetti_theorem_general}. We remark that the works~\cite{Jain2020,Vidick2017} also used threshold theorems (of somewhat different forms) to obtain DIQKD security proofs. However, \editB{}{as discussed in the introduction}, their proof techniques currently only apply to protocols using one-way error correction, and yield lower asymptotic keyrates compared to the iid case. In contrast, the results we derive here could be applied to all protocols having the appropriate symmetry properties. While they currently do not yield a full reduction to the iid case, our hope is that it would be possible to develop them further to obtain security proofs that are more generally applicable and yield higher asymptotic keyrates compared to~\cite{Jain2020,Vidick2017}, as was the case for de Finetti theorems in device-dependent QKD~\cite{Christandl2009a}.

\subsection{\edit{}{Bound on the diamond distance between channels}}

Here we consider channels on boxes of the following form: A channel $\E$ that acts on boxes of the form $P_{A|X}$ and outputs a random variable $R$ as its result is described by a probability distribution $P^\E_{X}$ on $\X$, and a conditional probability distribution $P^\E_{R|AX}$ which determines the result $R$ given $A$ and $X$. When acting on $P_{A|X}$ the channel produces a distribution on $R$ given by
\[
\E(P_{A|X})(r) = \sum_{x,a} P^\E_{X}(x)P_{A|X}(a|x)P^\E_{R|AX}(r|ax).
\] 
This definition is general enough to capture all protocols in a parallel DIQKD scenario, where all bits of the $n$-bit input $\X$ are entered at the same time into the box. It does not cover all protocols that are possible in a sequential DIQKD scenario \cite{ArnonFriedman2018}, where some of the input bits are only given to the box after some output bits have been received. In such a sequential scenario it is in principle possible to construct channels where the input to the box in some round depends on the output of the box in previous rounds.

If we consider boxes $P_{AE|XZ}$, where the additional $E,Z$ interface is held by Eve, we can also apply the channel $\E$ only to the $A,X$ interface to obtain a box with input $Z$ and outputs $R$ and $E$. We will denote this box by $\left(\E\otimes\id\right)(P_{AE|XZ})_{RE|Z}$.

We define the distance between two channels $\E$ and $\F$ by how well Eve can distinguish the boxes $\left(\E\otimes\id\right)(P_{AE|XZ})_{RE|Z}$ and $\left(\F\otimes\id\right)(P_{AE|XZ})_{RE|Z}$ if she is also given access to $R$. Then she can choose her input $Z$ dependent on $R$. This leads to the following definition \cite{ArnonFriedman2015, Hanggi2010}:
\begin{mydef}
Let $\E$ and $\F$ be two channels acting on boxes of the form $P_{A|X}$. The distinguishablity of $\E$ and $\F$ using the box $P_{AE|XZ}$ is given by
\begin{align}
&\norm{\left(\E-\F\right)\otimes\id\left(P_{AE|XZ}\right)} \nonumber \\
&= \sum_{r}\max_z\sum_e \left|\sum_{a,x} P_{AE|XZ}(ae|xz) \left(P^\E_{X}(x)P^\E_{R|AX}(r|ax)-P^\F_{X}(x)P^\F_{R|AX}(r|ax) \right)\right|.
\end{align}
We define the diamond distance between the channels with respect to some set $\P$ of boxes to be the following:
\begin{align}
&||\E-\F||_\Diamond^\P = \sup_{P_{AE|XZ} \in \P} \norm{\left(\E-\F\right)\otimes\id\left(P_{AE|XZ}\right)}.
\end{align}
\end{mydef}
Similarly to the usual diamond distance between quantum channels, the above definition of diamond distance with respect to some set $\P$ is a measure of how distinguishable the channels are with respect to a distinguisher that can only use boxes from $\P$. Simple choices of $\P$ include for instance the sets of quantum or non-signaling boxes.
For the following main theorem of this section we will however take $\P$ to be the set of quantum boxes $P_{ABE|XYZ}$ such that the marginal $P_{AB|XY}$ has CHSH symmetry, and denote the diamond distance with respect to this $\P$ as $||\E-\F||_\Diamond^{\mathrm{quantum,CHSH}}$. Note that if the action of the channels $\E,\F$ can be described by Alice and Bob first performing the depolarizing procedure described above, this restriction causes no change in the diamond distance as compared to choosing $\P$ to be the entire set of quantum boxes $P_{ABE|XYZ}$.
\begin{thm}
Let $\E$ and $\F$ two channels on $n$-round boxes of the form $P_{AB|XY}$, and let $\tau_{AB|XY}$ be the de Finetti box from theorem \ref{thm:de_finetti_chsh_symmetric}. Then
\begin{align}
||\E-\F||_\Diamond^{\mathrm{quantum,CHSH}} \leq (n+1)^2 \sup_{\tau_{ABE|XYZ}} \norm{(\E-\F)\otimes\id(\tau_{ABE|XYZ})}
\label{eq:postselection_ns}
\end{align}
where the supremum is taken over all non-signaling boxes that have the marginal $\tau_{AB|XY}$.
\label{thm:postselection_ns}
\end{thm}

\begin{proof}[Proof of theorem \ref{thm:postselection_ns}]
Let $P_{ABE|XYZ}$ be a quantum box whose marginal $P_{AB|XY}$ has CHSH symmetry. Let $R_{AB|XY}$ be such that
\[
\tau_{AB|XY} = \frac{1}{(n+1)^2}P_{AB|XY} + \left(1-\frac{1}{(n+1)^2}\right)R_{AB|XY}.
\label{eq:definition_R}
\]
By theorem \ref{thm:de_finetti_chsh_symmetric} all entries of $R_{AB|XY}$ are positive. Because the non-signaling condition is linear and $\tau_{AB|XY}$ and $P_{AB|XY}$ are non-signaling, $R_{AB|XY}$ is also non-signaling. Now we define an extension $\tau_{ABE|XYZ}$ of $\tau_{AB|XY}$ as follows: The box has one more possible outcome for Eve then the box $P_{ABE|XYZ}$. We will call this additional outcome $e^*$. The box $\tau_{ABE|XYZ}$ then works as follows: With probability $(n+1)^{-2}$ the box acts just like $P_{ABE|XYZ}$, and with probability $1-(n+1)^{-2}$ it always returns $e^*$ to Eve and acts like $R_{AB|XY}$ for Alice and Bob. Formally, this is given by
\[
\tau(abe|xyz) = \begin{cases} \frac{1}{(n+1)^2}P(abe|xyz) & \text{ if } e \neq e^* \\ \left(1-\frac{1}{(n+1)^2}\right)R(ab|xy) &\text{ if } e =e^* \end{cases}.
\]
Since $\tau_{ABE|XYZ}$ is the linear combination of two non-signaling boxes it is non-signaling itself. Furthermore, by eq.~\eqref{eq:definition_R} it is an extension of $\tau_{AB|XY}$. Finally, it holds that
\begin{align}
&||(\E-\F)\otimes\id(\tau_{ABE|XYZ})|| \nonumber \\
&= \sum_r \max_z \sum_e \left| \sum_{a,b,x,y} \tau(abe|xyz) \left(P^\E(xy)P^\E(r|abxy) - P^\F(xy)P^\F(r|abxy)\right)\right|\nonumber \\
&\geq \sum_r \max_z \sum_{e \neq e^*} \left| \sum_{a,b,x,y} \tau(abe|xyz) \left(P^\E(xy)P^\E(r|abxy) - P^\F(xy)P^\F(r|abxy)\right)\right|\nonumber \\
&= (n+1)^{-2}\sum_r \max_z \sum_{e \neq e^*} \left| \sum_{a,b,x,y} P(abe|xyz) \left(P^\E(xy)P^\E(r|abxy) - P^\F(xy)P^\F(r|abxy)\right)\right|\nonumber \\
&=(n+1)^{-2}||(\E-\F)\otimes\id(P_{ABE|XYZ})||.
\end{align}
Hence for all $P_{ABE|XYZ}$
\[
||(\E-\F)\otimes\id(P_{ABE|XYZ})|| \leq (n+1)^2\sup_{\tau_{ABE|XYZ}} ||(\E-\F)\otimes\id(\tau_{ABE|XYZ})||.
\]
Taking the supremum over all $P_{ABE|XYZ}$ with CHSH symmetric marginal $P_{AB|XY}$ yields the claim.
\end{proof}

\subsection{\edit{}{Implications for DIQKD security proofs}}

Theorem \ref{thm:postselection_ns} can be seen as a version of the postselection theorem for quantum channels \cite{Christandl2009a}. It allows us to bound the distance between two channels by the distinguishability of the channels when Eve is restricted to extensions of a fixed de Finetti box. This could potentially be a useful tool in security proofs of DIQKD protocols, because a protocol can be defined to be secure if its diamond distance to an ideal protocol is small \cite{Maurer2011, Portmann2021}. \edit{}{In particular, Theorem~\ref{thm:postselection_ns} implies that to prove security against coherent quantum attacks, it is sufficient to prove security for the case where the marginal of Alice and Bob is given by $\tau_{AB|XY}$, and Eve possesses a non-signaling extension of this box. 
This helps to simplify the task of a DIQKD security proof, because it means that it suffices to analyze (extensions of) the \emph{specific} box $\tau_{AB|XY}$, which has the convenient property of being a convex combination of iid quantum boxes.}  

However, there is a caveat: Although the box $\tau_{AB|XY}$ is quantum, the extensions $\tau_{ABE|XYZ}$ in the theorem statement here are allowed to be general non-signaling boxes. Furthermore, we will show in the next section that an adversary who has access to arbitrary non-signaling extensions of $\tau_{AB|XY}$ can actually be strictly better at distinguishing channels than an adversary who has only access to collective attack boxes. Hence theorem \ref{thm:postselection_ns} does not immediately yield security against coherent attacks from security against collective attacks \edit{}{--- still, since it does allow a ``partial'' reduction to the latter (namely, allowing us to focus on extensions of a quantum de Finetti box $\tau_{AB|XY}$), it may still  simplify DIQKD security proofs.}

\edit{}{We also remark that for our second de Finetti theorem (theorem~\ref{thm:de_finetti_theorem_2}), we currently do not have in mind an explicit application of it in DIQKD security proofs. Still, we presented it in this work in case it has applications in other contexts --- for instance, it might be useful in proving properties that only depend on the box $P_{AB|XY}$ itself, rather than involving its extensions as in DIQKD security proofs. It is also more similar to the original de Finetti theorem for classical random variables, or the early versions for quantum states developed in e.g.~\cite{Caves2002}.}

\section{\editB{Impossibility of}{Difficulties in} bounding the diamond distance by restriction to collective attacks}
\label{sec:impossible_strengthening}

Theorem \ref{thm:postselection_ns} shows that to bound the diamond distance between two channels $\E$ and $\F$ it is sufficient to restrict the attacker to non-signaling extensions of a fixed de Finetti box. There are many desirable strengthenings of this result: For example, one could restrict the attacker only to quantum extensions of the de Finetti box. One could also further restrict the attacker to use only quantum extensions of iid boxes, instead of the fixed de Finetti box. Finally, one could also restrict the attacker to collective attack boxes (defined below), as would be desirable to conclude security against coherent attacks directly from security against collective attacks. In this section we will see that a theorem like theorem \ref{thm:postselection_ns} does not hold for this strongest restriction; \editB{}{more precisely, we show that it is impossible for the bound~\eqref{eq:postselection_ns} to hold if the supremum is instead restricted to collective attack boxes (which we define later below)}. It remains open whether such a theorem holds for one of the other strengthenings mentioned above, \editB{}{or whether a reduction to collective attacks in a somewhat different form is possible}. \edit{}{(We note that the answers to these questions do not straightforwardly follow from existing no-go theorems on non-signaling privacy amplification \cite{Haenggi2013, ArnonFriedman2012}, since in our result $\tau_{AB|XY}$ is restricted to a convex combination of quantum distributions rather than non-signaling distributions.)}

We start by defining the boxes that an attacker is allowed to use in collective attacks. 
While there is potentially some flexibility in defining this, here we use a definition that essentially corresponds to the boxes considered in the security proofs of~\cite{Acin2007,Tan2020}, up to a collective measurement on Eve's side-information:
\begin{mydef}
\label{def:collective_attack_box}
An $n$-round quantum box $P_{ABE|XYZ}$ is a collective attack box if there are
\begin{itemize}
\item single round Hilbert spaces $\H_A$ and $\H_B$ for Alice and Bob and a Hilbert space $\H_E$ for Eve and 
\item a state $\rho_{ABE} \in \mathrm{End}\left(\H_A^{\otimes n} \otimes \H_B^{\otimes n} \otimes \H_E\right)$ such that the marginal $\rho_{AB}$ is iid and
\item for each $x$ a POVM $\{E^{a,x} \in \mathrm{End}(\H_A)|a\}$ \edit{}{on $\H_A$}, for each $y$ a POVM $\{F^{b,y} \in \mathrm{End}(\H_B)|b\}$  \edit{}{on $\H_B$} and for each $z$ a POVM $\{G^{e,z} \in \mathrm{End}(\H_E)|e\}$  \edit{}{on $\H_E$}
\end{itemize}
such that
\[
P(abe|xyz) = \tr\left[\rho_{ABE} \left(E^{a_1,x_1} \otimes ... \otimes E^{a_n,x_n} \otimes F^{b_1,y_1} \otimes ... \otimes F^{b_n,y_n} \otimes G^{e,z} \right) \right].
\] 
\end{mydef}
We remark on two aspects of the above definition. Firstly, note that we assume the Hilbert spaces of Alice and Bob can be split into $n$ rounds, but assume no internal structure of Eve's Hilbert space. However, since the state $\rho_{AB}$ is iid and thus has an iid purification, the state $\rho_{ABE}$ is related by a local operation on Eve's system to this iid purification. Since we assume nothing about $G^{e,z}$ except that it is a valid POVM, we can absorb this local operation into $G^{e,z}$ and thus describe any collective attack box also with a state $\rho_{ABE}$ that is iid. Collective attack boxes can thus be seen as boxes that are essentially iid, up to Eve performing a local operation on her systems followed by a joint measurement. Secondly, the fact that the definition inherently incorporates this measurement means that Eve's system is forced to be a box rather than a genuine quantum state. However, for the purposes of computing diamond distance, this in fact does not make a difference (as long as arbitrary POVMs $G^{e,z}$ are allowed in the definition) --- observe that the process of a distinguisher producing a guess for the channel can be described as it performing a POVM on its systems, and the optimal such POVM essentially induces a valid choice of $G^{e,z}$ in the above definition.

A crucial observation on collective attack boxes is the following: Consider the box $P^{e,z}_{AB|XY}$ which described the outcomes of Alice and Bob conditioned on Eve inputting $z$ and getting outcome $e$. It is given by
\begin{align}
P^{e,z}_{AB|XY}(ab|xy) &= \frac{P_{ABE|XYZ}(abe|xyz)}{P_{E|Z}(e|z)} \nonumber \\
&=  \tr\left[\rho^{e,z}_{AB} \left(E^{a_1,x_1} \otimes ... \otimes E^{a_n,x_n} \otimes F^{b_1,y_1} \otimes ... \otimes F^{b_n,y_n} \right) \right]
\end{align}
where $\rho^{e,z}_{AB}$ is a valid state,
\[
\rho^{e,z}_{AB} = \frac{\tr_E\left[\rho_{ABE}\left(\id_A\otimes\id_B\otimes G^{e,z}\right)\right]}{\tr\left[\rho_{ABE}\left(\id_A\otimes\id_B\otimes G^{e,z}\right)\right]}.
\]

Because $\sum_a E^{a,x} = \sum_b F^{b,y} = \id$ we see that $P^{e,z}_{AB|XY}$ is not only non-signaling between Alice and Bob, but also between the individual rounds. This means that for example
$\sum_{a_1} P^{e,z}_{AB|XY}(a_1a_2...a_nb|xy)$ does not depend on $x_1$.

The following main result of this section exploits this insight:
\begin{thm}
\label{thm:no_postselection_for_collective_attacks}
For each $n>1$ there exist two channels $\E$ and $\F$ acting on $n$-round boxes of the form $P_{AB|XY}$ such that $||(\E-\F)\otimes\id(P_{ABE|XYZ})|| = 0$ for all collective attack boxes $P_{ABE|XYZ}$, but $||\E-\F||_\Diamond^{\mathrm{quantum,CHSH}} \neq 0$. 
\end{thm}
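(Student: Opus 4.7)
The plan is to exploit the inter-round non-signaling property of collective attack conditional boxes highlighted just before the statement — namely that $\sum_{a_1} P^{e,z}_{AB|XY}(a_1 a_2\dots b|xy)$ is always independent of $x_1$ — while constructing a quantum CHSH-symmetric box whose extension to Eve uses deliberately non-local Alice POVMs that break this property after Eve's measurement.

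I would focus on $n = 2$ (for larger $n$ the construction is padded by $n-2$ dummy rounds as described at the end). Define both channels $\E$ and $\F$ to input $y = (0,0)$ to Bob and output $r = a_2$, with $\E$ inputting $x = (0,0)$ to Alice and $\F$ inputting $x = (1,0)$. For any collective attack box $P_{ABE|XYZ}$, expand $\norm{(\E-\F)\otimes\id(P)}$ as a $P(e|z)$-weighted sum over $(e,z)$ of $\sum_r |\E(P^{e,z})(r) - \F(P^{e,z})(r)|$; this reduces Step~1 to showing that $\sum_{a_1, b_1, b_2} P^{e,z}(a_1,r,b_1,b_2|(x_1,0),(0,0))$ is independent of $x_1$, which is exactly the inter-round non-signaling property.

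For the separating quantum CHSH-symmetric box, take $\H_A = \H_B = \Complex^4$ with $\rho_{AB} = \id/16$, so that $P(ab|xy) = \tr[\rho_{AB}(E^{a,x}\otimes F^{b,y})] = 1/16$ for any rank-one POVMs; in particular $P_{AB|XY}$ is the uniform box, trivially CHSH symmetric. Let Bob's POVM be the computational-basis measurement, and choose Alice's POVMs to be deliberately non-local,
\begin{equation}
E^{a_1 a_2, x_1 x_2} = \begin{cases} |a_1 a_2\rangle\langle a_1 a_2| & x_1 = 0, \\ |a_1, a_1\oplus a_2\rangle\langle a_1, a_1\oplus a_2| & x_1 = 1, \end{cases}
\end{equation}
so that $\sum_{a_1} E^{a_1 a_2, x_1 x_2}$ does depend nontrivially on $x_1$. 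Purify $\rho_{AB}$ as $|\Psi\rangle_{ABE} = \tfrac{1}{4}\sum_k |k\rangle_{AB}|k\rangle_E$ with $|k\rangle = |i_1 i_2 j_1 j_2\rangle$ ranging over the computational basis of $\H_A \otimes \H_B$, and have Eve measure in this basis. Conditioning on $e = (i_1, i_2, j_1, j_2)$, the post-measurement state is $|i_1 i_2\rangle\langle i_1 i_2|\otimes|j_1 j_2\rangle\langle j_1 j_2|$, and Alice's deterministic $a_2$ output equals $i_2$ under $\E$ but $i_1\oplus i_2$ under $\F$. A direct computation yields $\norm{(\E-\F)\otimes\id(P)} = 1$. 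For $n > 2$, append maximally-mixed dummy rounds with product computational-basis POVMs on Alice's and Bob's sides; the marginal stays uniform hence CHSH symmetric, and the channels (which ignore the extra rounds) work unchanged.

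The main subtlety I expect is verifying that the non-local structure of Alice's POVMs is hidden at the marginal level, where CHSH symmetry demands round-wise non-signaling, but is revealed after Eve's measurement. The mechanism is that $\rho_{AB} = \id/16$ makes $\tr[\rho_{AB} (\sum_{a_1} E^{a_1 a_2, x_1 x_2})\otimes F^{b,y}]$ collapse to a dimension count independent of the inputs, while Eve's rank-one projective measurement collapses Alice onto a pure basis state that does feel the $x_1$-dependence of $\sum_{a_1} E^{a_1 a_2, x_1 x_2}$. By contrast, any collective attack has product Alice POVMs for which $\sum_{a_1} E^{a_1, x_1}\otimes E^{a_2, x_2} = \id \otimes E^{a_2, x_2}$ is unconditionally $x_1$-independent regardless of what state Eve's measurement produces on $\H_A^{\otimes n}\otimes\H_B^{\otimes n}$, which is precisely why the channels cannot distinguish collective-attack boxes.
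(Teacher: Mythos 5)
Your proof is correct, and it takes a genuinely different route from the paper's. The paper argues abstractly: Lemma~\ref{lem:instistinguishable_classical_channels} produces classical channels indistinguishable on any linear subspace of distributions; Lemma~\ref{lem:no_postselection_theorem_single_round_boxes} applies it to the statistics $T=\sum_{i\le m}x_i$ and $W=\sum_{i>m}a_i$, which are independent under inter-round non-signaling; the separating box is a deterministic response function; and CHSH symmetry of the marginal is then imposed a posteriori via the depolarization protocol of~\cite{Masanes2006}, with Eve holding the depolarization randomness. That last step is lossy, since Eve can only distinguish on the (exponentially unlikely) event that depolarization acts trivially. Your construction is concrete and avoids the loss entirely: you use channels with fixed deterministic inputs differing only in $x_1$, and a quantum box built from the maximally mixed $\rho_{AB}=\id/16$ with Alice POVMs that entangle the two rounds, purified for Eve. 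The mechanism is clean --- because $\rho_{AB}$ is maximally mixed, the $x_1$-dependence of $\sum_{a_1}E^{a_1a_2,x_1x_2}$ is invisible in the marginal (which is uniform, hence trivially CHSH symmetric), while Eve's rank-one purifying measurement collapses Alice onto a computational basis state that does feel that dependence, giving separation exactly $1$. Both proofs ultimately hinge on the same structural fact the paper isolates beforehand, namely that $P^{e,z}_{AB|XY}$ is inter-round non-signaling for collective attacks. A noteworthy consequence of your constant separation, which the paper's exponentially small separation does not give, is that it rules out (over the unrestricted class of channels considered) the weakened postselection bound with additive error $g(n)\to 0$ that the paper speculates about after Theorem~\ref{thm:no_postselection_for_collective_attacks}: your example forces $g(n)\ge 1$ for every $n$, so any such refinement could only hold for a narrower class of channels, e.g., those incorporating random inputs or a built-in depolarization step, which excludes your deterministic-input channels.
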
 
Theorem \ref{thm:no_postselection_for_collective_attacks} shows that a statement like theorem \ref{thm:postselection_ns} cannot hold if we maximize only over collective attack boxes instead of all non-signaling extensions of the fixed de Finetti box (not even for, say, an exponential prefactor instead of $(n+1)^2$). This shows that an attacker who has access to any non-signaling extension of the fixed de Finetti box is stronger than an attacker who has only access to collective attack boxes.

In the proof of theorem \ref{thm:no_postselection_for_collective_attacks} we will use that all collective attack boxes are non-signaling between the rounds of Alice and Bob, and that the non-signaling condition is linear. The following lemma will be crucial. It states that for each linear subspace of the probability distributions on some set, there are two channels (which act on probability distributions, not yet on boxes), that cannot be distinguished by any probability distribution in the linear subspace:
\begin{lem}
Let $\X$ be some finite set. We treat the unnormalized probability distributions on $\X$ as an orthant of an $|\X|$ dimensional real vector space. Let $\P$ be some linear subspace in this vector space, and $Q = (Q(x))_{x\in\X} \not\in \P$. Then there are two conditional probability distributions $P^\E_{R|X}$ and $P^\F_{R|X}$ such that the following holds: Denote for a probability distribution $P$ on $\X$ by $\E(P)$ and $\F(P)$ the distributions on $\R$ which are obtained by first sampling $x$ using $P$ and the sampling $r$ using  $P^\E_{R|X}$ and $P^\F_{R|X}$, i.e.~$\E(P)(r)=\sum_x P(x)P^\E(r|x)$. Then
\[
\E(P)=\F(P)
\]
for all $P \in \P$ and
\[
\E(Q) \neq \F(Q).
\]
\label{lem:instistinguishable_classical_channels}
\end{lem}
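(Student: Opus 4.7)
The plan is to reduce the problem to a standard finite-dimensional separation argument and then encode the separating functional into a pair of binary-output channels. First I would observe that, since $\P \subseteq \Reals^{|\X|}$ is a linear subspace and $Q \notin \P$, elementary linear algebra gives a covector $(c_x)_{x\in\X}$ defining a linear functional $\ell(P) := \sum_x c_x P(x)$ that vanishes identically on $\P$ but satisfies $\ell(Q) \neq 0$. (Just take any complement of $\P$, extend by $Q$, and read off the dual coordinate along $Q$.) None of the earlier threshold or de Finetti results are needed for this step.

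Next I would use $\ell$ to construct two candidate channels with output set $\R = \{0,1\}$. Let $\epsilon > 0$ be small enough that $\epsilon \max_x |c_x| \leq 1/2$, and set
\[
P^\E(1|x) = \tfrac{1}{2} + \epsilon c_x, \qquad P^\F(1|x) = \tfrac{1}{2} - \epsilon c_x,
\]
with $P^\E(0|x) = 1 - P^\E(1|x)$ and $P^\F(0|x) = 1 - P^\F(1|x)$. The bound on $\epsilon$ guarantees all four entries lie in $[0,1]$, so these are honest stochastic kernels. By construction, $P^\E(\cdot|x) - P^\F(\cdot|x)$ is literally proportional (on $r=1$) to $c_x$, so the two channels can only tell two inputs apart through the single functional $\ell$.

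Finally I would verify both conclusions by direct computation. For any $P$, the difference satisfies $\E(P)(1) - \F(P)(1) = \sum_x P(x)(P^\E(1|x) - P^\F(1|x)) = 2\epsilon\,\ell(P)$, and analogously $\E(P)(0) - \F(P)(0) = -2\epsilon\,\ell(P)$. Hence $\E(P) = \F(P)$ for every $P \in \P$ because $\ell$ kills $\P$, while $\E(Q) \neq \F(Q)$ because $\ell(Q) \neq 0$.

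I do not expect any real obstacle; the lemma is essentially a dualization statement in disguise. The only piece of care is keeping the channels in the probability simplex, which is handled trivially by rescaling $c_x$ through the choice of $\epsilon$. The payoff is that the separation provided by the Hahn--Banach-style functional can be realized by a concrete channel pair, which is exactly what the subsequent construction for Theorem \ref{thm:no_postselection_for_collective_attacks} needs.
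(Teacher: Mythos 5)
Your proposal is correct and takes essentially the same route as the paper: the paper picks a separating vector $\Delta$ with $|\Delta_x| \leq 1$ annihilating $\P$ but not $Q$, and sets $P^\E(0|x) = (1+\Delta_x)/2$, $P^\F(0|x) = (1-\Delta_x)/2$, which is exactly your construction with the normalization absorbed into $\Delta$ rather than carried by a separate $\epsilon$. The verification that the difference in output distributions is proportional to the functional applied to $P$ is identical.
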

\begin{proof}
There exists a vector $\Delta = (\Delta_x)_{x\in\X}$ with $|\Delta_x| \leq 1$ for all $x$ and $\Delta \cdot P = 0$ for all $P \in \P$ and $\Delta \cdot Q \neq 0$. Take $\R = \{0,1\}$ and
\begin{align}
P^\E(0|x) &= \frac{1+\Delta_x}{2}\\
P^\E(1|x) &= \frac{1-\Delta_x}{2}\\
P^\F(0|x) &= \frac{1-\Delta_x}{2}\\
P^\F(1|x) &= \frac{1+\Delta_x}{2}.
\end{align}
Then for $P$ any probability distribution on $\X$
\begin{align}
||\E(P)-\F(P)||_1 &= \sum_r \left| \sum_x P(x) (P^\E(r|x) - P^\F(r|x)) \right| \nonumber \\
 &= |P\cdot \Delta| + |P \cdot (-\Delta)|\nonumber \\
&= 2|P\cdot \Delta|.
\end{align}
Hence for all $P \in \P$
\[
||\E(P)-\F(P)||_1=0
\]
and
\[
||\E(Q)-\F(Q)||_1\neq 0.
\]
\end{proof}

We will now construct two channels $\E$ and $\F$ that act on boxes $P_{A|X}$ (i.e.~we consider only Alice) that cannot be distinguished by any collective attack box, but that can be distinguished by a certain quantum box that is not a collective attack box. We will then see how to modify this construction to include Bob and to ensure that the box used to distinguish both channels has CHSH symmetry on the marginal of Alice and Bob.

\begin{lem}
\label{lem:no_postselection_theorem_single_round_boxes}
For each $n>1$ there are two channels $\E$ and $\F$ that act on $n$-round boxes $P_{AE|XZ}$ such that
$||(\E-\F)\otimes\id(P_{AE|XZ})|| = 0$ for all collective attack boxes $P_{AE|XZ}$, but $||\E-\F||_\Diamond^{\mathrm{quantum,CHSH}} \neq 0$
\end{lem}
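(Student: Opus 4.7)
The plan is to apply lemma \ref{lem:instistinguishable_classical_channels} to the linear subspace of distributions encoding non-signaling between rounds, which is the structural constraint inherent to collective attacks but violated by suitable quantum boxes.

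First I would fix a common input distribution $P^{\mathrm{inp}}_X$ (say uniform on $\X^n$) for both channels $\E$ and $\F$, so that the difference appearing in the diamond-norm integrand involves only the result kernels $P^\E_{R|AX}$ and $P^\F_{R|AX}$. For any collective attack box $P_{AE|XZ}$, the observation preceding definition~\ref{def:collective_attack_box} shows that $P^{e,z}_{A|X}$ arises from a product measurement on a state $\rho^{e,z}_A$, hence is non-signaling between the $n$ rounds: $\sum_{a_i} P^{e,z}(a|x)$ does not depend on $x_i$. Consequently the joint $\tilde P^{e,z}(a,x) := P^{\mathrm{inp}}(x)\, P^{e,z}(a|x)$ lies in the linear subspace $\P$ of signed measures on $\A^n \times \X^n$ defined by the linear conditions that $\sum_{a_i}\tilde P(a,x)/P^{\mathrm{inp}}(x)$ is independent of $x_i$ for every round $i$.

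Next I would exhibit a quantum box $P_{AE|XZ}$ whose associated $\tilde Q^{e,z}$ escapes $\P$ for some $(e,z)$. The paper's definition of a quantum box places no internal round structure on Alice's Hilbert space, so the POVM elements $E^{a,x}$ need not factorize across rounds. A concrete $n=2$ candidate (easily padded to larger $n$) uses $\A = \X = \{0,1\}$, trivial Eve, a single-qubit state $\rho_A$, and POVM elements $E^{(a_1,a_2),(x_1,x_2)} = \delta_{a_2,x_1}\, M^{a_1,(x_1,x_2)}$ for any valid single-qubit POVMs $\{M^{a_1,(x_1,x_2)}\}_{a_1}$. This is a bona fide quantum box that deterministically signals $x_1$ into $a_2$, so the corresponding $\tilde Q$ violates the round-$1$ linear condition defining $\P$.

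Finally I would invoke lemma \ref{lem:instistinguishable_classical_channels} with ground set $\A^n \times \X^n$, subspace $\P$, and the distribution $\tilde Q$ above, obtaining kernels $P^\E_{R|AX}$ and $P^\F_{R|AX}$ on $\R = \{0,1\}$ that agree on $\P$ but differ on $\tilde Q$. Pairing these with the common input distribution $P^{\mathrm{inp}}_X$ defines $\E$ and $\F$, and a routine bookkeeping check shows that the inner sum in $\norm{(\E-\F)\otimes\id(P_{AE|XZ})}$ factors, for each $(r,e,z)$, as $P(e|z)\bigl(\E(\tilde P^{e,z}) - \F(\tilde P^{e,z})\bigr)(r)$, which vanishes for every collective attack box and is nonzero for the signaling quantum box above. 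The main obstacle I expect is pinning down this signaling quantum box so that it is unambiguously quantum in the paper's sense and genuinely outside $\P$; once that example is fixed, the rest is a direct application of lemma \ref{lem:instistinguishable_classical_channels} together with the linearity of the integrand.
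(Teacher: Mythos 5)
Your proposal follows essentially the same strategy as the paper's proof: apply Lemma~\ref{lem:instistinguishable_classical_channels} to a linear subspace built from the inter-round non-signaling constraints that collective attack boxes must satisfy, and then exhibit a quantum box that signals between rounds and hence escapes that subspace. The two presentations differ only in bookkeeping. The paper compresses the ground set down to a pair of scalar summaries $(w,t)$, where $t=\sum_{i\le m}x_i$ depends on inputs in the first $m$ rounds and $w=\sum_{i>m}a_i$ on outputs in the remaining rounds; the non-signaling of collective attack boxes then reduces cleanly to ``$W$ is independent of $T$,'' so the linear subspace $\P$ lives on a tiny set and the distinguishing box is a one-line deterministic function of $\sum_i x_i$. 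You instead keep the full ground set $\A^n\times\X^n$ and impose all per-round non-signaling equalities as the linear conditions. That is correct but heavier: the factorization of the diamond-norm integrand into $P(e|z)\bigl(\E(\tilde P^{e,z})-\F(\tilde P^{e,z})\bigr)(r)$, which you rightly identify as the key bookkeeping step, is identical in both versions, and the $(w,t)$ reduction simply makes the remaining combinatorics trivial.

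One remark on your ``main obstacle'': there is no difficulty in making the distinguishing box quantum, and your POVM construction with $E^{(a_1,a_2),(x_1,x_2)}=\delta_{a_2,x_1}M^{a_1,(x_1,x_2)}$ is more elaborate than needed. When Eve's interface is trivial, \emph{every} conditional probability distribution $Q_{A|X}$ is quantum in the paper's sense --- take $\H_A=\Complex$, $\rho_A=1$, and $E^{a,x}=Q(a|x)$ as the POVM elements. This is exactly why the paper can use a purely deterministic classical box. The genuine constraint being exploited is not quantumness of the distinguishing box but rather the per-round tensor-product structure of collective attack boxes, which forces inter-round non-signaling; your proposal identifies this correctly. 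Finally, like the paper's own proof of this lemma, you do not verify that the distinguishing box has CHSH symmetry; in the paper this gap is deliberately deferred to the proof of theorem~\ref{thm:no_postselection_for_collective_attacks}, which applies the depolarization procedure to restore CHSH symmetry while preserving a nonzero distinguishing advantage, and any full writeup of your version would need the same patch.
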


\begin{proof}
We first construct the channels $\E$ and $\F$, then show that Eve cannot distinguish them if she is restricted to collective attack boxes, and finally show that there is a quantum box (naturally not a collective attack box) that can be used to distinguish both channels. We construct the channels $\E$ and $\F$ as follows, depending on a parameter $m>n/2$. For both channels, Alice does the following steps:
\begin{enumerate}
\item She enters uniformly random inputs $x_1,...,x_n$ into the inputs of her box.
\item She collects the outputs $a_1,...,a_n$.
\item She calculates
\[
t = \sum_{i=1}^m x_i
\]
and 
\[
w = \sum_{i=m+1}^n a_i.
\]
\label{last_step}
\end{enumerate}
Now consider the linear subspace $\P$ of probability distributions on the $(w,t)$ given by the linear constraints
\[
\frac{P(w,t)}{2^{-m}\binom{m}{t}} = \frac{P(w,t')}{2^{-m}\binom{m}{t'}}
\label{eq:no_ps_lin_constraint}
\]
for all $w,t,t'$ and take a $Q \not\in \P$ (a specific $Q$ will be constructed below). Alice constructs the channels $\E$ and $\F$ by applying the conditional probability distributions $P^\E_{R|WT}$ and $P^\F_{R|WT}$ from lemma \ref{lem:instistinguishable_classical_channels} to her result $(w,t)$ from step \ref{last_step}.

Now we prove that Eve cannot distinguish $\E$ and $\F$ if she uses a collective attack box $P_{AE|XZ}$. For this, we use that for all $e$ and $z$ $P^{e,z}_{A|X}$ is non-signaling between the rounds of Alice. In particular, the outputs of the rounds $m+1,...n$ cannot depend on the inputs in the rounds $1,...,m$, so $W$ and $T$ are independent when generated using $P^{e,z}_{A|X}$. Hence
\[
\frac{P^{e,z}_{WT}(wt)}{{2^{-m}\binom{m}{t}} } = P^{e,z}_{W|T}(w|t) = P^{e,z}_{W|T}(w|t') = \frac{P^{e,z}_{WT}(wt')}{{2^{-m}\binom{m}{t'}} } 
\]

The box $P^{e,z}$ therefore satisfies eq.~\eqref{eq:no_ps_lin_constraint}, and hence $||(\E-\F)(P^{e,z}_{A|X})||=0$ by lemma \ref{lem:instistinguishable_classical_channels}. Since this holds for all $e$ and $z$ we have also $||(\E-\F)\otimes \id(P_{AE|XZ})|| = 0$.

Finally we construct a box $Q_{A|X}$ that allows Eve to distinguish the channels $\E$ and $\F$ with a nonzero advantage over guessing. Notice that here Eve does not keep any system (neither quantum not classical) for herself and can distinguish the channels only from their result $R$. $Q_{A|X}$ can then be an arbitrary conditional probability distribution. Take $Q_{A|X}$ such that the result is surely $a=(1,1,...,1)$ if $\sum_{i} x_i > n/2$ and surely $a=(0,0,...,0)$ otherwise.
Then if $t > n/2$
\[
\frac{Q_{WT}(n-m,t)}{2^{-m}\binom{m}{t}} = Q_{W|T}(n-m|t) = 1
\] 
and if $t < m-n/2$
\[
\frac{Q_{WT}(n-m,t)}{2^{-m}\binom{m}{t}} = Q_{W|T}(n-m|t) = 0
\] 
so eq.~\eqref{eq:no_ps_lin_constraint} does not hold.
\end{proof}
Now we can adapt the statement of lemma \ref{lem:no_postselection_theorem_single_round_boxes} to prove theorem \ref{thm:no_postselection_for_collective_attacks}.
\begin{proof}[Proof of theorem \ref{thm:no_postselection_for_collective_attacks}]
First we generalize the construction in lemma \ref{lem:no_postselection_theorem_single_round_boxes} to boxes for which also Bob has an input, i.e.~boxes of the form $P_{AB|XY}$. For this, we take the channels $\E$ and $\F$ such that they act like in lemma \ref{lem:no_postselection_theorem_single_round_boxes} on Alice's inputs and outputs, and give an arbitrary input $Y$ and ignore the output $B$ for Bob. Clearly, both channels still cannot be distinguished with collective attack boxes, but can be distinguished by a box $Q_{AB|XY}$, which acts like the box $Q_{A|X}$ from lemma \ref{lem:no_postselection_theorem_single_round_boxes} on $A$ and $X$ and arbitrarily on $B$ and $Y$. However, $Q_{AB|XY}$ does not have CHSH symmetry. Using the depolarizing procedure in \cite{Masanes2006} we can construct a box $\tilde{Q}_{ABE|XY}$ such that the marginal $\tilde{Q}_{AB|XY}$ has CHSH symmetry and there is an output $e^*$ for Eve (corresponding to the case in which the depolarizing protocol does nothing), such that $\tilde{Q}^{e^*}_{AB|XY} = Q_{AB|XY}$. Then to distinguish $\E$ and $\F$ using $\tilde{Q}_{ABE|XY}$ Eve first checks $E$. If $E=e^*$ she distinguishes $\E$ and $\F$ as in lemma \ref{lem:no_postselection_theorem_single_round_boxes}, otherwise she just guesses randomly. Because the probability that $E=e^*$ is nonzero, we have $\norm{(\E-\F)\otimes\id(\tilde{Q}_{ABE|XY})} > 0$.
\end{proof}

Several remarks are in order. Firstly, the diamond norm $||\E-\F||_\Diamond^{\mathrm{quantum,CHSH}}$ between the channels $\E$ and $\F$ from theorem \ref{thm:no_postselection_for_collective_attacks} is exponentially small in $n$, because Eve only tries to distinguish $\E$ and $\F$ in the case when the depolarizing protocol does nothing. This means that while a theorem in the same form as theorem \ref{thm:postselection_ns} cannot hold for collective attack boxes, it is entirely possible that there is, for example, a theorem that yields a bound of the form
\[||\E-\F||_\Diamond^{\mathrm{quantum,CHSH}} \leq  f(n) \sup_{P_{ABE|XYZ}} \norm{(\E-\F)\otimes\id(P_{ABE|XYZ})} + g(n),
\] 
where $g(n)\rightarrow 0$ as $n\rightarrow \infty$.
Such a result could be sufficient to allow security proof reductions to collective attacks.

Secondly, the results of this section relied only on the fact that collective attack boxes are non-signaling between the rounds. This property arose entirely from the fact that Alice and Bob's measurements act on different Hilbert spaces in different rounds, and hence also holds more generally, i.e.~even if the measurements in each round are different, or the states are entangled across rounds. This seems to suggest that in DIQKD, imposing an assumption that different rounds have different Hilbert spaces may already be a fairly strong restriction by itself\footnote{For comparison, in device-dependent QKD, this assumption is often implicitly imposed by default. This perhaps suggests a possible source of what appear to be greater challenges in non-iid DIQKD security proofs as compared to device-dependent QKD. It may also indicate that the default assumptions in device-dependent QKD could be stronger than they initially appear.}, even if we allow many other non-iid behaviours across states in different rounds, such as classical correlations or even entanglement.\edit{}{(In fact, security proof reductions to the iid case under this assumption were indeed previously studied in~\cite{Haenggi2010,Masanes2011}, though the latter was restricted to one-way protocols.)} Whether this assumption seems reasonable may depend on the protocol --- for instance, it seems unsatisfactory if each honest party has to use a single device for all inputs/outputs (as in~\cite{ArnonFriedman2018}, which used the EAT to avoid this assumption for one-way protocols), but if each honest party has access to $n$ devices that are ``well isolated'' from each other, it might be more plausible.

Thirdly, we remark that all sequential DIQKD protocols naturally fulfill a certain form of non-signaling constraints between the individual rounds of Alice and Bob: Alice and Bob's inputs in one round cannot influence the outputs in preceding rounds. Theorem \ref{thm:no_postselection_for_collective_attacks} does not rule out that a result like theorem \ref{thm:postselection_ns} exists for channels with such a sequential structure. However, \edit{}{preserving} such a sequential structure \edit{}{for the purposes of a security proof} appears rather incompatible with permutation symmetry, so exploiting such sequential structures might require different techniques from those used in this paper.

\section{Conclusion}
\label{sec:conclusion}
In this paper we proved two de Finetti theorems for quantum conditional probability distributions with CHSH symmetry. The advantage of these theorems over similar de Finetti theorems \cite{ArnonFriedman2015,Christandl2009} is that the de Finetti boxes are in the quantum set. The first de Finetti theorem states that the entries of a CHSH symmetric box are upper bounded, up to a polynomial factor in $n$, by the entries of a fixed de Finetti box. This theorem is actually not restricted to boxes with CHSH symmetry but can be applied to arbitrary symmetries if a corresponding threshold theorem is available. The second de Finetti theorem states that the marginal of the first $k$ rounds of an $n$ round CHSH symmetric box is close to (and not just upper bounded by) a de Finetti box.

We further showed that the first de Finetti theorem can be used to obtain a bound on the diamond distance between two channels acting on boxes. Specifically, an attacker who tries to distinguish two channels $\E$ and $\F$ can be restricted to non-signaling extensions of a fixed quantum de Finetti box without decreasing the distinguishability between both channels by more than a polynomial factor. Because the security of DIQKD protocols is defined in terms of the distance between the channel given by the protocol and an ideal channel this statement might be useful in security proofs. However our theorem does not immediately allow to conclude security against coherent attacks from security against collective attacks: A straightforward strengthening of it to bound the diamond distance between two channels by the distinguishability using collective attack boxes does not hold. Based on some insights in our proof approach, we speculate that in DIQKD, assuming that boxes in different rounds act on different Hilbert spaces may already be a fairly strong constraint, even if we allow correlations or entanglement between states in different rounds.

\section*{Acknowledgements}

We thank Renato Renner for useful suggestions and feedback on this project, as well as Srijita Kundu for guidance on threshold theorems. \edit{}{We also thank the reviewers for helpful feedback in improving the manuscript, including drawing our attention to the results in~\cite{Haenggi2010,Masanes2011} for boxes satisfying the constraint that measurements in different rounds commute.}

\section*{Funding}

This project was funded by the Swiss National Science Foundation via the National Center for Competence in Research for Quantum Science and Technology (QSIT), the Air Force Office of Scientific Research (AFOSR) via grant FA9550-19-1-0202, and the QuantERA project eDICT.

\appendix

\section{Proof of lemma \ref{lem:chsh_concavity_argument}}
\label{sec:proof_concavity_lemma}
\begin{proof}[Proof of lemma \ref{lem:chsh_concavity_argument}]
The second inequality follows directly because $f$ is non-negative and attains its maximum at $x^*$. The idea for the first inequality is to replace $f$ by a piecewise linear function that equals 0 at $a$ and $b$, and equals $f(x^*)$ at $x^*$. By concavity this piecewise linear function is always smaller than $f$ itself.  Writing this out explicitly: By concavity and non-negativity we have for all $x \in [a,x^*]$ 
\begin{align}
f(x) &= f \left(\frac{x-a}{x^*-a}x^* + \frac{x^*-x}{x^*-a}a\right)\nonumber \\
&\geq \frac{x-a}{x^*-a}f(x^*)+\frac{x^*-x}{x^*-a}f(a)\nonumber \\
&\geq \frac{x-a}{x^*-a}f(x^*).
\end{align}

Therefore we have
\[
\int_a^{x^*} f(x)^n \der x \geq f(x^*)^n \int_a^{x^*} \left(\frac{x-a}{x^*-a}\right)^n\der x = \frac{1}{n+1}(x^*-a)f(x^*)^n.
\]
Analogously it follows
\[
\int_{x^*}^b f(x)^n \der x \geq \frac{1}{n+1}(b-x^*)f(x^*)^n,
\]
so together
\[
\int_a^b f(x)^n\der x = \int_a^{x^*} f(x)^n \der x + \int_{x^*}^b f(x)^n \der x \geq \frac{1}{n+1}(b-a)f(x^*)^n.
\]
\end{proof}

\section{A de Finetti theorem for general symmetries}
\label{sec:a_de_Finetti_theorem_for_general_symmetries}
In this section we will show that for arbitrary games a threshold theorem such as theorem \ref{thm:threshold_unger} can always be used to prove a de Finetti theorem similar to theorem \ref{thm:de_finetti_chsh_symmetric}. Conversely, we will also see that a de Finetti theorem implies a threshold theorem. This means that proving a de Finetti theorem for some symmetry is just as hard as proving a threshold theorem for the game associated with that symmetry.

\subsection{Statement of the main theorem}
Throughout this section we will only consider boxes with a single interface and with a single round input set \editB{}{$\Xh$}, a single round output set \editB{}{$\Ah$}, \editB{}{and corresponding $n$-round input and out sets $\X=\Xh^n$ and $\Y=\Yh^n$}. CHSH symmetric boxes can be described like this by treating the two parties Alice and Bob as one, so the input and output sets are \editB{}{$\Ah=\Xh=\{0,1\}^2$}. We consider the following generalization of CHSH symmetry:
\begin{mydef}
\begin{enumerate}
\item Let $d \in \Integers$ and let \editB{}{$w:\Ah\times\Xh\rightarrow \{1,\dotsc, d\}$} be some function. We will call $w$ the predicate function of the symmetry.  For \editB{}{$a\in\A$} and \editB{}{$x\in\X$} we define $\freq^w(a,x) = (\frac{k_1}{n},\dots,\frac{k_d}{n}) \in \simplex{d}$ with $k_r = | \left\lbrace i | w(a_i,x_i)=r \right\rbrace |$ . Here $\simplex{d}$ denotes the $d$-dimensional simplex. 
\item We say an $n$-round box $P_{A|X}$ has $w$-symmetry if $P(a|x)=P(a'|x')$ whenever $\freq^w(a,x)=\freq^w(a',x')$, for all $a,a' \in \A^n$ and $x,x' \in \X^n$. 
\end{enumerate}
\end{mydef}
CHSH symmetry is an example of $w$-symmetry with $w((a,b), (x,y))=1$ if $a\oplus b = xy$ and $w((a,b), (x,y))=2$ if $a\oplus b \neq xy$. The definition of $w$-symmetry is an extension of the symmetries considered in \cite{ArnonFriedman2015} for permutation invariant boxes, where only certain predicate functions $w$ where considered, namely those where for each pair $x$, $x'$ either the images of $w(\cdot, x)$ and $w(\cdot, x')$ are disjoint or $w(\cdot,x)$ and $w(\cdot,x')$ are identical up to a permutation of the elements of \editB{}{$\Ah$}.

Instead of the set of quantum single round CHSH boxes, we will in this section consider a general convex set $\Q$ of single round boxes $Q_{\hat{A}|\hat{X}}$. If we view $\Q$ as a convex subset in \editB{}{$\Reals^{|\Ah||\Xh|}$} we can consider its affine hull: the smallest affine superset of $\Q$. Throughout this section we will denote the dimension of the affine hall by $d'$. For the CHSH symmetric case, $\Q$ is the set of CHSH symmetric quantum boxes, and $d'=1$.

In the CHSH symmetric case it was crucial that the expected number of wins of $n$-rounds of the CHSH games is between $\frac{2-\sqrt{2}}{4}$ and $\frac{2+\sqrt{2}}{4}$ when the games are played with iid quantum boxes. In our generalization the interval $[\frac{2-\sqrt{2}}{4},\frac{2+\sqrt{2}}{4}]$ will be replaced by a set of expected frequencies $\F_\mu$:
\begin{mydef}
Let $\mu$ be a probability distribution on $\X$, $\Q$ a convex set of single round boxes and $w$ a predicate function. The set of expected frequencies is
\[
\mathcal{F}_\mu = \left\lbrace \left( \left. \sum_{\substack{a,x \\ w(a,x)=r}}Q(a|x)\mu(x) \right)_{r=1...d} \right| Q_{\editB{}{\hat{A}|\hat{X}}} \in \mathcal{Q} \right\rbrace \subseteq \simplex{d}.
\]
\end{mydef}
In the CHSH symmetric case it is $\F_\mu = \{(p,1-p)|p\in [(2-\sqrt{2})/4, (2+\sqrt{2})/4]\}$ regardless of $\mu$.
We can now state the main theorem of this section:
\begin{thm}
Let \editB{}{$w:\Ah\times\Xh \rightarrow [d]$} be a predicate function and $\Q$ a convex subset of single round boxes with an affine hull of dimension $d'$. Let $\F_\mu$ be the set of expected frequencies.
There exists a de Finetti state $\tau_{A|X} \in \mathrm{conv}\left(\left\lbrace Q_{A|X}^{\otimes n} | Q_{\editB{}{\hat{A}|\hat{X}}} \in \Q\right\rbrace\right)$ independent of $w$ such that the following hold:
\begin{enumerate}
\item Let $P_{A|X}$ be an $n$-round box, let $\mu$ be a probability measure on $\X$, and take any $f\in \simplex{d}$. Suppose there exists some $C>0$ such that
\[
\Pr_{P_{A|X},\mu^{\otimes n}}\left[ \freq^w(A,X) = f \right] \leq C\exp\left(-\inf_{f' \in \mathcal{F}_{\mu}} D(f||f') n\right).
\label{eq:threshold_thm}
\]
Then 
\[
\Pr_{P_{A|X},\mu^{\otimes n}}\left[ \freq^w(A,X)=f\right] \leq C \binom{n+d'}{d'} (n+1)^{d-1} \Pr_{\tau_{A|X},\mu^{\otimes n}}\left[ \freq^w(A,X)=f\right].
\]

\item Let $P_{A|X}$ be an $n$-round box with $w$-symmetry and let each box $Q_{\editB{}{\hat{A}|\hat{X}}} \in \Q$ have $w$-symmetry. Let $C$ be such that for all $f \in \simplex{d}$ there is a $\mu>0$ such that eq.~\eqref{eq:threshold_thm} holds. Then
\[
P(a|x) \leq C \binom{n+d'}{d'} (n+1)^{d-1} \tau(a|x) \qquad \forall a\in\A^n, x\in\X^n.
\label{eq:de_finetti_theorem}
\]

\item Let $P_{A|X}$ be an $n$-round box for which eq.~\eqref{eq:de_finetti_theorem} holds. Then
\[
\Pr_{P_{A|X},\mu^{\otimes n}}\left[ \freq^w(A,X) = f \right] \leq C \binom{n+d'}{d'} (n+1)^{d-1} \exp\left(-\inf_{f' \in \mathcal{F}_{\mu}} D(f||f') n\right).
\]
\end{enumerate}
\label{thm:deFinetti_main}
\end{thm}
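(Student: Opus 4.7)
The natural generalization of the proof of Theorem~\ref{thm:de_finetti_chsh_symmetric} is to define $\tau_{A|X} = \vol(\Q)^{-1} \int_\Q Q_{A|X}^{\otimes n}\, dQ$, integrating with respect to the $d'$-dimensional Lebesgue measure on the affine hull of $\Q$. This $\tau$ is manifestly a de Finetti box in $\mathrm{conv}(\{Q^{\otimes n} : Q\in\Q\})$ and is independent of the predicate function $w$. The heart of the argument is then to lower bound $\Pr_{\tau, \mu^{\otimes n}}[\freq^w(A,X) = f]$ by $\binom{n+d'}{d'}^{-1}(n+1)^{-(d-1)} \exp(-n\inf_{f'\in\F_\mu} D(f\|f'))$; combined with the threshold hypothesis \eqref{eq:threshold_thm}, this immediately yields Part~1, while Part~2 follows from Part~1 by exploiting $w$-symmetry, and Part~3 is a converse that uses the de Finetti bound together with a direct upper bound on $\Pr_\tau[\freq=f]$.

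For an iid box $Q$, the counts of rounds on which $w(a_i,x_i)=r$ are multinomially distributed with single-round parameters $(f_Q)_r = \sum_{w(a,x)=r} Q(a|x)\mu(x) \in \F_\mu$, so $\Pr_{Q^{\otimes n}, \mu^{\otimes n}}[\freq = f] = \binom{n}{nf_1,\ldots,nf_d}\prod_r (f_Q)_r^{nf_r}$. Averaging over $\Q$ gives $\Pr_\tau[\freq=f] = \binom{n}{nf_1,\ldots,nf_d}\, \vol(\Q)^{-1} \int_\Q g(Q)^n\, dQ$ with $g(Q) := \prod_r (f_Q)_r^{f_r}$, which is concave in $Q$ since it is a weighted geometric mean (weights $f_r$ summing to $1$) of the nonnegative affine functions $Q\mapsto (f_Q)_r$; its maximum over $\Q$ equals $\max_{f'\in\F_\mu}\prod_r (f'_r)^{f_r} = \exp(-H(f) - \inf_{f'\in\F_\mu} D(f\|f'))$. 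The analytic key is then a $d'$-dimensional generalization of Lemma~\ref{lem:chsh_concavity_argument}: for compact convex $K\subset\Reals^{d'}$ and concave nonnegative $g:K\to\Reals$ with maximum $g(x^\star)$, we have $\int_K g(x)^n\, dx \geq \binom{n+d'}{d'}^{-1} \vol(K)\, g(x^\star)^n$. I would prove this by inscribing in $K$ a $d'$-simplex with apex $x^\star$, using concavity to dominate $g$ from below on this simplex by the tent function equal to $g(x^\star)$ at $x^\star$ and vanishing on the opposite face, evaluating the integral of its $n$-th power via the standard multidimensional beta-integral identity, and taking a supremum over such inscribed simplices to recover $\vol(K)$. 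Combined with the type-class estimate $\binom{n}{nf_1,\ldots,nf_d}\geq (n+1)^{-(d-1)} e^{nH(f)}$, this yields the desired lower bound on $\Pr_\tau[\freq=f]$.

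Part~1 then follows by dividing the threshold hypothesis by this lower bound. For Part~2, if every $Q\in\Q$ is $w$-symmetric then so is each $Q^{\otimes n}$ and hence $\tau$, so both $P$ and $\tau$ assign common values $P_f,\tau_f$ to all $(a,x)$ with $\freq(a,x)=f$; applying Part~1 with the $\mu>0$ provided by hypothesis gives $P_f\, M_f(\mu) \leq C\binom{n+d'}{d'}(n+1)^{d-1}\tau_f\, M_f(\mu)$, where $M_f(\mu) := \sum_{(a,x):\freq(a,x)=f}\mu^{\otimes n}(x) > 0$ because $\mu$ is strictly positive, and cancelling $M_f(\mu)$ yields \eqref{eq:de_finetti_theorem}. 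For Part~3, summing \eqref{eq:de_finetti_theorem} against $\mu^{\otimes n}(x)$ over the frequency class of $f$ gives $\Pr_P[\freq=f] \leq C\binom{n+d'}{d'}(n+1)^{d-1}\Pr_\tau[\freq=f]$, and the straightforward upper bound $\Pr_\tau[\freq=f] \leq \max_{Q\in\Q}\Pr_{Q^{\otimes n}}[\freq=f] \leq e^{-n\inf_{f'\in\F_\mu} D(f\|f')}$ (using $\binom{n}{nf_1,\ldots}\leq e^{nH(f)}$) closes the argument. The step I expect to require the most care is the $d'$-dimensional concavity lemma, specifically keeping the constant $\binom{n+d'}{d'}$ sharp when $\Q$ is not itself a simplex and the maximizer $x^\star$ lies on $\partial\Q$ rather than in its interior.
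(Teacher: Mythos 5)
Your overall architecture matches the paper's: define $\tau$ by integrating $Q^{\otimes n}$ over $\Q$ with respect to Lebesgue measure on its affine hull, reduce to a concave-integration lemma plus a multinomial/type-class estimate for Part~1, then derive Part~2 by cancelling the common $w$-symmetry factor $M_f(\mu)$ and Part~3 by the direct upper bound $\Pr_\tau[\freq^w=f]\leq\sup_Q\Pr_{Q^{\otimes n}}[\freq^w=f]\leq e^{-n\inf D}$. Parts~2 and~3 are essentially identical to the paper's. Your Part~1 is a slightly different route: the paper first proves a $w$-independent pointwise bound $\tau(a|x) \geq \binom{n+d'}{d'}^{-1}\sup_Q Q^{\otimes n}(a|x)$ (their Lemma~16, by applying concavity to $\phi\mapsto\prod_{a',x'}Q_\phi(a'|x')^{f_{a'x'}}$) and separately bounds $\Pr_P[\freq^w=f]$ by $\sup_Q\Pr_{Q^{\otimes n}}[\freq^w=f]$, whereas you apply the concavity argument directly to the coarser functional $g(Q)=\prod_r(f_Q)_r^{f_r}$. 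Both work since both are weighted geometric means of affine functionals of $Q$; yours is a bit more economical for Part~1 but does not yield the paper's pointwise bound on $\tau(a|x)$.

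There is, however, a genuine gap in your proposed proof of the integration lemma $\int_K g^n \geq \binom{n+d'}{d'}^{-1}\vol(K)\,g(x^\star)^n$. You suggest inscribing a single $d'$-simplex $S\subseteq K$ with apex $x^\star$, bounding $g$ from below on $S$ by the linear tent function, computing $\int_S (\text{tent})^n = \binom{n+d'}{d'}^{-1}\vol(S)\,g(x^\star)^n$ via the beta integral, and then "taking a supremum over such inscribed simplices to recover $\vol(K)$." But $\sup_S\vol(S)$ over simplices inscribed in $K$ with one vertex pinned at $x^\star$ is strictly less than $\vol(K)$ unless $K$ is itself a simplex --- for the unit square with $x^\star$ at a corner, the largest inscribed triangle with that vertex has area $1/2$, not $1$. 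A single simplex therefore loses a geometry-dependent factor $\vol(K)/\sup_S\vol(S)$, which can be arbitrarily large. The fix is to \emph{decompose} $K$ into a (finite or continuum) family of cones/simplices all sharing $x^\star$ as apex and \emph{sum} the per-piece estimates; this is exactly what the paper's Lemma~15 does continuously via spherical coordinates centered at $x^\star$, writing $\vol(C)=\frac{1}{d}\int R(\Omega)^d\,\mu(d\Omega)$ and bounding $\int_C f^n$ cone-by-cone. You correctly flagged this step as the delicate one; the fix is exactly the missing ingredient. (Your worry about $x^\star\in\partial\Q$ is handled automatically in the paper by working with $\cl{C}$ and an $\epsilon$-approximate maximizer.)
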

Qualitatively, we can interpret the equations and statements in theorem \ref{thm:deFinetti_main} as follows: 
\begin{itemize}
\item The condition~\eqref{eq:threshold_thm} is a perfect threshold theorem, written in a form similar to eq.~\eqref{eq:thresh_summary} (which was for the CHSH case). It states that the probability to obtain a frequency distribution $f$ outside of the set $\F_\mu$ of expected frequencies decays exponentially with $n$ and with the distance from $f$ to $\F_\mu$, as measured by the relative entropy. 

\item Part 1 of theorem \ref{thm:deFinetti_main} states that if we have such a threshold theorem, 
then the probability of obtaining the frequencies $f$ using the box $P_{A|X}$ can be bounded by the probability of obtaining $f$ using the de Finetti box $\tau_{A|X}$, up to a polynomial factor. We have chosen to state this part of the theorem separately because it does not require $P_{A|X}$ to be $w$-symmetric. 

\item Part 2 asserts that if we have the further condition that $P_{A|X}$ and all boxes in $\Q$ are $w$-symmetric, then we can get a de Finetti theorem analogous to theorem \ref{thm:de_finetti_chsh_symmetric} in the CHSH case. We will derive part 2 from part 1 by expressing the entries of $P_{A|X}$ and $\tau_{A|X}$ in terms of their respective probabilities of obtaining $\freq^w(a,x)=f$, which is possible by $w$-symmetry. 

\item Finally part 3 shows the other direction of the equivalence between a threshold theorem and a de Finetti theorem: A box $P_{A|X}$ satisfying the de-Finetti theorem statement~\eqref{eq:de_finetti_theorem} also satisfies a threshold theorem, albeit with a larger prefactor than in eq.~\eqref{eq:threshold_thm}.
\end{itemize}

To prove theorem \ref{thm:deFinetti_main}, we first show some preparatory lemmas in sections \ref{subsec:define_de_Finetti_box} and \ref{subsec:bound_pax_by_iid}, then combine them in section \ref{subsec:proof_main_definetti}. 
Some insight can be gained into the proof structure by writing a slightly different proof of theorem \ref{thm:de_finetti_chsh_symmetric} (the CHSH case), in order to draw analogies to parts 1 and 2 of theorem \ref{thm:deFinetti_main} separately. This version of the proof proceeds as follows: first prove eq.~\eqref{eq:chsh_tau_bound} as before, giving a lower bound on $\tau(a|x)$.
However, we reorder the proof after that point. Namely, observe that combining~\eqref{eq:thresh_summary}, \eqref{eq:simplify_entropy} and \eqref{eq:chsh_beta_function_identity} gives
\[
p_k \leq (n+1) 2^n \binom{n}{k} \sup_{p\in[1-w,w]} f(p)^n.
\label{eq:chsh_bound_p_by_iid}
\]
Putting together~\eqref{eq:chsh_tau_bound} and~\eqref{eq:chsh_bound_p_by_iid} gives
\[
p_k \leq (n+1)^2 2^n \binom{n}{k} \tau(ab|xy).
\label{eq:chsh_freq_bounds}
\]
The symmetry condition has not been used up to this point. We now use it to relate $p_k$ to $P(ab|xy)$ via~\eqref{eq:prob_of_winning_k_games}, which yields the desired inequality $P(ab|xy) \leq (n+1)^2 \tau(ab|xy)$.

The proof in the subsequent sections basically follows the same structure as the above version.
First, eq.~\eqref{eq:chsh_tau_bound} is generalized to lemma~\ref{lem:def_tau} in section~\ref{subsec:define_de_Finetti_box}. Next, eq.~\eqref{eq:chsh_bound_p_by_iid} is replaced by lemma~\ref{lem:pBoundiid} in section~\ref{subsec:bound_pax_by_iid}, bounding the probability of obtaining some outcome frequency in terms of a supremum over iid boxes (the $2^n \binom{n}{k}$ factor in~\eqref{eq:chsh_bound_p_by_iid} counts different ways to achieve the specified frequency). These lemmas are combined to obtain part 1 of theorem \ref{thm:deFinetti_main}, which is the generalization of~\eqref{eq:chsh_freq_bounds}. Finally, the symmetries are invoked to relate the box distribution to the probabilities of obtaining some outcome frequencies, analogous to~\eqref{eq:prob_of_winning_k_games}, to yield part 2 of theorem \ref{thm:deFinetti_main}.

\subsection{Construction and properties of the de Finetti box}
\label{subsec:define_de_Finetti_box}
In this section we will construct the de Finetti box $\tau_{A|X}$ and show that this $\tau(a|x)$ is at most polynomially smaller then $Q^{\otimes n}(a|x)$, for all $Q\in\Q$. Before that we need to prove some preparatory lemmas:

The following lemma and proof are adopted from \cite{Sarwate2012}.
\begin{lem}[Matrix Determinant Lemma]
Let $A\in \Reals^{n\times n}$ be an invertible matrix and $v \in \Reals^n$. Then
\[
\det(A-vv^T) = \det(A)(1-v^TA^{-1}v)
\] 
\label{lem:matrix_determinant_lemma}
\end{lem}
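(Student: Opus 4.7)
The plan is to reduce the problem to the well-known identity $\det(I - xy^T) = 1 - y^T x$ for column vectors $x,y \in \Reals^n$, and then apply it after factoring out $A$. Concretely, since $A$ is invertible we may write
\[
A - vv^T = A\left(I - A^{-1} v v^T\right),
\]
so that $\det(A - vv^T) = \det(A)\det\left(I - A^{-1}vv^T\right)$ by multiplicativity of the determinant. It then suffices to show that $\det(I - xy^T) = 1 - y^T x$ with $x = A^{-1}v$ and $y = v$, which immediately yields the claim.

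To prove the rank-one determinant identity, I would use an eigenvalue argument. If $x = 0$ the identity is trivial, so assume $x \neq 0$. The matrix $I - xy^T$ fixes every vector $z$ in the $(n-1)$-dimensional hyperplane $\{z : y^T z = 0\}$ (since $(I - xy^T)z = z - x(y^T z) = z$), giving the eigenvalue $1$ with multiplicity at least $n-1$. In addition, $(I - xy^T)x = x - x(y^T x) = (1 - y^T x)x$, so $x$ is an eigenvector with eigenvalue $1 - y^T x$. These account for all $n$ eigenvalues (counted with algebraic multiplicity, after a standard argument that $x$ is linearly independent from the hyperplane when $y^T x \neq 0$, and a continuity/limit argument handles the degenerate case $y^T x = 0$). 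The determinant is the product of the eigenvalues, giving $1 - y^T x$.

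An alternative route, which I would mention as a second approach, is the block-matrix identity: compute the determinant of
\[
M = \begin{pmatrix} A & v \\ v^T & 1 \end{pmatrix}
\]
by Schur complement in two different ways. Pivoting on the $(1,1)$-block $A$ gives $\det(M) = \det(A)(1 - v^T A^{-1} v)$, while pivoting on the $(2,2)$-entry $1$ gives $\det(M) = 1 \cdot \det(A - vv^T)$. Equating these two expressions yields the lemma directly, with no case analysis needed.

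I do not anticipate a real obstacle here; the only subtlety is handling the degenerate case $y^T x = 0$ in the eigenvalue proof (where $x$ may lie in the hyperplane $y^\perp$ and the eigenvalues would coincide), which is cleanly sidestepped by the block-matrix approach. For that reason I would present the block-matrix proof as the main argument.
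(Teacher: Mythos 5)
Your block-matrix argument---the one you designate as the main proof---is essentially identical to the paper's: the paper writes $\det(A-vv^T)$ as the determinant of the $(n+1)\times(n+1)$ block matrix with blocks $A$, $v$, $v^T$, $1$, and then evaluates that same determinant via an explicit block-triangular factorization to obtain $\det(A)(1-v^TA^{-1}v)$, which is precisely your ``pivot on the $(2,2)$ entry, then pivot on the $(1,1)$ block'' phrased as a single factorization. Your first (eigenvalue) route is also correct, but, as you yourself observe, it needs a continuity or perturbation argument in the degenerate case $y^Tx=0$, which is exactly why both you and the paper land on the block-matrix proof as the cleaner one.
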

\begin{proof}
We calculate $\det(A-vv^T)$ as the determinant of a block matrix:
\begin{align}
\det(A-vv^T) &= \det\left(\begin{matrix} A & v \\ v^T & 1 \end{matrix}\right) \nonumber \\
&= \det\left[\left(\begin{matrix}A & 0 \\ v^T & 1\end{matrix}\right)\left(\begin{matrix}I & A^{-1}v \\ 0& I-v^TA^{-1}v\end{matrix}\right)\right]\nonumber \\
&=\det(A)(1-v^TA^{-1}v)
\end{align}
\end{proof}

\begin{lem}
Let $\alpha_1,...\alpha_n \geq 0$ with $\sum_i \alpha_i \leq 1$. The map $f:\left(\Reals^+_0\right)^n \rightarrow \Reals^+_0$ given by
\[
f(x_1,..,x_n)=\prod_{i=1}^n x_i^{\alpha_i}
\]
is concave.
\label{lem:concavity}
\end{lem}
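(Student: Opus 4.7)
The plan is to establish concavity by computing the Hessian of $f$ on the interior $(\Reals^+)^n$, showing it is negative semidefinite via the matrix determinant lemma (lemma~\ref{lem:matrix_determinant_lemma}) applied to each principal submatrix, and then extending to the closed orthant by continuity of $f$.

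First, I would compute the partial derivatives directly: for $i \neq j$, $\partial_i \partial_j f = \alpha_i \alpha_j x_i^{-1} x_j^{-1} f$, and $\partial_i^2 f = \alpha_i(\alpha_i - 1) x_i^{-2} f$. This can be written compactly as $H = f \cdot (vv^T - D)$, where $v \in \Reals^n$ has entries $v_i = \alpha_i/x_i$ and $D$ is diagonal with $D_{ii} = \alpha_i/x_i^2$. Since $f \geq 0$, showing $D - vv^T \succeq 0$ suffices to conclude $H \preceq 0$, and hence concavity on the interior.

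To prove $D - vv^T \succeq 0$, I would first reduce to the case where all $\alpha_i > 0$ (the indices with $\alpha_i = 0$ contribute zero rows and columns to both $D$ and $vv^T$, and $f$ does not depend on those coordinates). Then $D$ is invertible and, for any subset $S \subseteq \{1,\dots,n\}$, lemma~\ref{lem:matrix_determinant_lemma} applied to the principal submatrix indexed by $S$ yields
\[
\det\bigl((D - vv^T)|_S\bigr) = \det(D|_S)\Bigl(1 - (v|_S)^T (D|_S)^{-1} (v|_S)\Bigr) = \det(D|_S)\Bigl(1 - \sum_{i\in S}\alpha_i\Bigr),
\]
which is nonnegative because $\det(D|_S) \geq 0$ and $\sum_{i\in S} \alpha_i \leq \sum_i \alpha_i \leq 1$. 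A real symmetric matrix all of whose principal minors are nonnegative is positive semidefinite, so $D - vv^T \succeq 0$ as required.

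Finally, to pass from the open orthant to the closed orthant $(\Reals^+_0)^n$, I would use that $f$ is continuous on $(\Reals^+_0)^n$ (with the convention $0^0 = 1$ handled by the assumption $\alpha_i \geq 0$): given any $x, y \in (\Reals^+_0)^n$ and $\lambda \in [0,1]$, approximate $x$ and $y$ by sequences of interior points and take the limit in the concavity inequality. I do not expect a serious obstacle — the main subtlety is just the bookkeeping of ensuring that $D$ is invertible (handled by dropping the $\alpha_i = 0$ coordinates) and that the principal-minors criterion is invoked correctly; the core identity is a direct consequence of the matrix determinant lemma together with the hypothesis $\sum_i \alpha_i \leq 1$.
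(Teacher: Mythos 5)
Your proposal is correct, and the core route is the same as the paper's: compute the Hessian, write it as $f\cdot(vv^T - D)$, and use the matrix determinant lemma to evaluate minors of $D - vv^T$. The difference lies in how you handle the degenerate cases, and yours is actually a bit cleaner. The paper first assumes $\alpha_i > 0$ and $\sum_i \alpha_i < 1$ strictly, invokes Sylvester's criterion for \emph{positive definiteness} (leading principal minors strictly positive), and then recovers the general case $\alpha_i \ge 0$, $\sum_i \alpha_i \le 1$ by approximating the exponents with a sequence $\alpha_i^{(m)} \to \alpha_i$ and passing to the limit in the concavity inequality. You instead discard the zero-$\alpha_i$ coordinates (which contribute nothing to $f$, $D$, or $vv^T$) and apply the \emph{semidefinite} version of the criterion — all principal minors, not just leading ones, nonnegative — which handles $\sum_i \alpha_i = 1$ directly without any limiting argument on the exponents. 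This is a genuine simplification: it removes one approximation step and gives the boundary case $\sum_i\alpha_i=1$ for free. One thing worth flagging, since it is easy to slip on: the semidefinite criterion really does require checking \emph{all} principal minors (e.g.\ $\mathrm{diag}(0,-1)$ has nonnegative leading principal minors but is not PSD), and you correctly range $S$ over all subsets. Your final continuity step to extend from the open to the closed orthant is also sound (and is in fact implicitly needed in the paper's version too, where it is not spelled out).
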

\begin{proof}
First assume that all $\alpha_i > 0$ and $\sum_i \alpha_i < 1$.  To show that $f$ is concave we compute the Hesse matrix:
\[
(Hf)_{ij} = \frac{\partial^2}{\partial x_i \partial x_j}f = \frac{\partial}{\partial x_j}\left(\frac{\alpha_i}{x_i}f\right) = \left(-\frac{\alpha_i}{x_i^2}\delta_{ij} + \frac{\alpha_i\alpha_j}{x_ix_j}\right)f =: -A_{ij}f
\]
with $A = \left(\frac{\alpha_i}{x_i^2}\delta_{ij} - \frac{\alpha_i\alpha_j}{x_ix_j}\right)_{1\leq i,j \leq 1}$. To show that $f$ is concave it is sufficient to show that $A$ is positive definite. Let $A^{(k)}$ be the upper left $k\times k$ block of $A$. By Sylvester's criterion $A$ is positive definite if $\det(A^{(k)}) >0 $ for all $k \in \{1,\dotsc,n\}$.

Let $B^{(k)} = \left(\frac{\alpha_i}{x_i^2}\delta_{ij}\right)_{1\leq i,j \leq k}$ and $v^{(k)} = \left(\frac{\alpha_i}{x_i}\right)_{1\leq i \leq k}$. Then
\[
A^{(k)} = B^{(k)} - v^{(k)}(v^{(k)})^T.
\]
By lemma \ref{lem:matrix_determinant_lemma} we calculate
\begin{align}
\det(A^{(k)}) &= \det(B^{(k)})\left(1-(v^{(k)})^T {(B^{(k)})}^{-1} v^{(k)}\right) \nonumber \\
&= \det(B^{(k)})\left(1-\sum_{i=1}^k \left(\frac{\alpha_i}{x_i}\right)^2\frac{x_i^2}{\alpha_i}\right)\nonumber \\
&=\det(B^{(k)})\left(1-\sum_{i=1}^k \alpha_i\right) > 0
\end{align}
where the last inequality follows from $\det(B^{(k)})>0$ and $\sum_{i=1}^k \alpha_i \leq \sum_{i=1}^n \alpha_i <1$. Hence $f$ is concave.

Now assume the general setting where also $\alpha_i=0$ and $\sum_i \alpha_i = 1$ is allowed. For each $i$ choose a sequence $(\alpha^{(m)}_i)_{m \in\Integers}$ such that $\alpha^{(m)}_i >0$, $\sum_i \alpha^{(m)}_i < 1$ and $\alpha^{(m)}_i \xrightarrow{m\rightarrow\infty} \alpha_i$. Let
\[
f^{(m)}(x_1...x_n)=\prod_{i=1}^n x_i^{\alpha^{(m)}_i}.
\]
By continuity
\[
f^{(m)}(x_1...x_n) \xrightarrow{m\rightarrow\infty} f(x_1...x_n).
\]
Now let $x=(x_1,...x_n), y=(y_1...y_n) \in \left(\Reals^+_0\right)^n$ and $\lambda \in [0,1]$. Then
\begin{align}
f(\lambda x + (1-\lambda) y) &= \lim_{m\rightarrow \infty} f^{(m)}(\lambda x + (1-\lambda) y) \nonumber \\
&\geq \lim_{m\rightarrow \infty}  \lambda f^{(m)}(x)+(1-\lambda) f^{(m)}(y) \nonumber \\
&= \lambda f(x)+(1-\lambda) f(y).
\end{align}
\end{proof}

The following lemma is the generalization of lemma \ref{lem:chsh_concavity_argument}.
\begin{lem}
Let $C \subseteq \Reals^d$ be a bounded convex set, and denote by $\vol(C)$ the volume of $C$ (under the Lebesgue measure). Then for any $n \in \Integers$ and any concave function $f:C\rightarrow\Reals^{+}_0$, we have
\[
\int_C f(x)^n \der x \geq \vol(C)\binom{n+d}{d}^{-1} \left(\sup_{x\in C} f(x)\right)^n.
\]
\label{lem:integration}
\end{lem}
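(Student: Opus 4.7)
The plan is to generalize the one-dimensional argument in Lemma~\ref{lem:chsh_concavity_argument} to $d$ dimensions by reducing to a radial integral. First I would pick $x^* \in C$ attaining (or approaching) $\sup_{x \in C} f(x)$. By concavity and non-negativity, for any $y \in C$ and $t \in [0,1]$,
\[
f\bigl((1-t)x^* + ty\bigr) \geq (1-t)f(x^*) + t f(y) \geq (1-t) f(x^*),
\]
so along every line segment from $x^*$ to a point in $C$, $f$ is bounded below by the linear function that equals $f(x^*)$ at $x^*$ and vanishes at the endpoint. This is the exact analogue of the piecewise-linear lower bound used in the one-dimensional proof.

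Next I would switch to spherical coordinates centered at $x^*$. For each direction $u \in S^{d-1}$, let $\rho(u) = \sup\{r \geq 0 : x^* + ru \in C\}$. Substituting $r = s\rho(u)$, the pointwise bound above gives $f(x^* + ru) \geq (1-r/\rho(u)) f(x^*)$ for $r \in [0,\rho(u)]$, so the radial integral becomes
\[
\int_0^{\rho(u)} f(x^* + ru)^n r^{d-1} \der r \;\geq\; f(x^*)^n \rho(u)^d \int_0^1 (1-s)^n s^{d-1} \der s.
\]
The remaining one-dimensional integral is a Beta function $B(d,n+1) = \frac{(d-1)!\,n!}{(n+d)!} = \frac{1}{d} \binom{n+d}{d}^{-1}$, which is exactly where the binomial factor in the bound appears.

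Finally I would integrate over $u \in S^{d-1}$ and invoke the standard identity
\[
\vol(C) = \int_{S^{d-1}} \int_0^{\rho(u)} r^{d-1}\der r \,\der u = \frac{1}{d}\int_{S^{d-1}} \rho(u)^d \, \der u,
\]
so that the factors of $1/d$ cancel and one obtains $\int_C f(x)^n \der x \geq f(x^*)^n \binom{n+d}{d}^{-1} \vol(C)$. Letting $f(x^*) \to \sup_{x \in C} f(x)$ gives the claim. I expect the only mild obstacle to be the fact that the supremum of $f$ need not be attained (since $C$ is only assumed bounded, not closed), which is easily sidestepped by a limit argument. Every other step is a direct generalization of the one-dimensional concavity argument, with spherical coordinates playing the role of the one-sided estimates used in Lemma~\ref{lem:chsh_concavity_argument}.
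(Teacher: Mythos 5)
Your proposal is correct and follows essentially the same route as the paper: pick an approximate maximizer $x^*$, bound $f$ from below along rays from $x^*$ via concavity, pass to spherical coordinates, and reduce the radial integral to the Beta function $B(d,n+1) = \frac{1}{d}\binom{n+d}{d}^{-1}$, matching it against the shell formula for $\vol(C)$. The paper handles the non-attained supremum with an explicit $\epsilon$-argument, which is the same limiting step you sketch at the end.
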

The proof idea is similar to that of lemma \ref{lem:chsh_concavity_argument} (assuming for simplicity that $f$ attains its supremum at some point $x^*\in C$): We will lower bound $f$ by a function that is zero on the boundary of $C$, takes the value $f(x^*)$ at $x^*$, and is determined on the rest of $C$ by ``interpolating linearly'' between the values at $x^*$ and the boundary of $C$. (Geometrically, the graph of this new function is basically the surface of a convex cone.)
\begin{proof}
Take any $\epsilon>0$. There exists some $x^*\in C$ such that $f(x^*) \geq \sup_{x\in C} f(x) - \epsilon$. 
We evaluate the integrals using spherical coordinates centered on this point: Let $S^{d-1}\subseteq \Reals^d$ be the $(d-1)$-sphere, and let $\mu$ be the surface measure on $S^{d-1}$ with respect to the usual Lebesgue measure on $\Reals^d$. Since integrals are unchanged by including points on the boundary, we can evaluate the integrals using the closure of $C$ instead, denoted as $\cl{C}$. This is a convex compact set, hence there is a ``radius function'' $R:S^{d-1}\rightarrow \Reals^+_0$ such that
\[\cl{C} = \{x^* + r\Omega | \Omega \in S^{d-1}, r \in [0, R(\Omega)]\}.\]
Then
\[
\vol(C) = \int_{S^{d-1}} \int_0^{R(\Omega)} r^{d-1}\der r \mu(\der\Omega) = \frac{1}{d}\int_{S^{d-1}} R(\Omega)^d \mu(\der\Omega).
\label{eq:CVolume}
\]
Since
\begin{align}
f(x^*+r\Omega) &= f\left(\frac{r}{R(\Omega)}\left(x^*+R(\Omega)\Omega\right)+\frac{R(\Omega)-r}{R(\Omega)}x^*\right) &  \nonumber \\
&\geq \frac{r}{R(\Omega}f(x^*+R(\Omega)\Omega) + \frac{R(\Omega)-r}{R(\Omega)}f(x^*) & \text{Concavity of } f  \nonumber \\
&\geq  \frac{R(\Omega)-r}{R(\Omega)}f(x^*) & \text{Positivity of } f
\label{eq:reduce_f_to_max} 
\end{align}
we have
\begin{align}
\int_C f(x)^n \der x &= \int_{S^{d-1}} \int_0^{R(\Omega)} r^{d-1}f(x^*+r\Omega)\der r \mu(\der\Omega) & \nonumber \\
& \geq f(x^*)^n \int_{S^{d-1}} \int_0^{R(\Omega)} r^{d-1} \left( \frac{R(\Omega)-r}{R(\Omega)}\right)^n \der r \mu(\der\Omega) & \text{by eq.~\eqref{eq:reduce_f_to_max}}  \nonumber \\
&=f(x^*)^n\int_{S^{d-1}} R(\Omega)^d \int_0^1 u^{d-1}(1-u)^n \der u \mu(\der \Omega).
\label{eq:integral_argument}
\end{align}

Using the property of the beta function
\[
\int_0^1 u^{d-1}(1-u)^n = \frac{1}{n+d}\binom{n+d-1}{d-1}^{-1} = \frac{1}{d}\binom{n+d}{d}^{-1}
\]
we find by eq.~\eqref{eq:integral_argument} and eq.~\eqref{eq:CVolume}
\[
\int_C f(x)^n \der x \geq \frac{1}{d}\binom{n+d}{d}^{-1}f(x^*)^n\int_{S^{d-1}} R(\Omega)^d  \mu(\der \Omega) = \binom{n+d}{d}^{-1} \vol(C) f(x^*)^n.
\]
Recalling that $f(x^*) \geq \sup_{x\in C} f(x) - \epsilon$, and $\epsilon>0$ was arbitrary, this implies the desired result.
\end{proof}

When we maximize $Q^{\otimes n}(a|x)$ over $Q \in \Q$ while keeping $a$ and $x$ fixed,  the maximum is achieved for a different $Q$ for each $a$ and $x$. However, the next lemma states that if we average over $Q$ we are at most a polynomial factor below that maximal value of $Q^{\otimes n}(a|x)$, no matter what $a$ and $x$ are.
\begin{lem}
There is a de Finetti state $\tau_{A|X} \in \mathrm{conv}\left(\left\lbrace Q^{\otimes n} | Q \in \Q\right\rbrace\right)$ such that for all \editB{}{$a \in \A$, $x \in \X$}
\[
\tau(a|x) \geq \binom{n+d'}{d'}^{-1} \sup_{Q \in \Q}Q^{\otimes n}(a|x),
\]
where $d'$ is the dimension of the affine hull of $\Q$.
\label{lem:def_tau}
\end{lem}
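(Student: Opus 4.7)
The plan is to define $\tau_{A|X}$ as the uniform average of iid boxes over $\Q$ itself (viewed as a subset of its affine hull), and then invoke the two preparatory lemmas (lemma \ref{lem:concavity} for concavity and lemma \ref{lem:integration} for the integral bound). Explicitly, since $\Q$ has $d'$-dimensional affine hull and is bounded (as a subset of the space of single-round conditional distributions), it has finite, positive $d'$-dimensional Lebesgue volume. I will set
\[
\tau_{A|X} = \frac{1}{\vol(\Q)}\int_\Q Q_{A|X}^{\otimes n}\der Q,
\]
where $\der Q$ denotes the $d'$-dimensional Lebesgue measure on the affine hull of $\Q$. Since this is a convex combination of iid boxes, $\tau_{A|X} \in \mathrm{conv}(\{Q^{\otimes n}|Q\in\Q\})$ by construction.

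Next I will fix $a\in\A^n, x\in\X^n$ and analyze the function $g:\Q\to\Reals^+_0$ defined by $g(Q) = Q^{\otimes n}(a|x)^{1/n}$. Writing $Q^{\otimes n}(a|x) = \prod_{i=1}^n Q(a_i|x_i)$, we have
\[
g(Q) = \prod_{i=1}^n Q(a_i|x_i)^{1/n}.
\]
Viewed as a function of the entries $\{Q(a|x)\}_{a,x}$, this has the form of lemma \ref{lem:concavity} with exponents $\alpha_i = 1/n$ satisfying $\alpha_i\geq 0$ and $\sum_i\alpha_i = 1$, so $g$ is concave in the entries of $Q$. Because each entry $Q(a|x)$ depends linearly (actually affinely) on the parametrization of the affine hull of $\Q$, concavity is preserved under restriction to this affine hull, and $g$ is concave on $\Q$.

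I will then apply lemma \ref{lem:integration} to the concave nonnegative function $g$ on the $d'$-dimensional convex set $\Q$, which yields
\[
\int_\Q Q^{\otimes n}(a|x)\der Q = \int_\Q g(Q)^n\der Q \geq \vol(\Q)\binom{n+d'}{d'}^{-1}\left(\sup_{Q\in\Q}g(Q)\right)^n = \vol(\Q)\binom{n+d'}{d'}^{-1}\sup_{Q\in\Q}Q^{\otimes n}(a|x).
\]
Dividing both sides by $\vol(\Q)$ and comparing with the definition of $\tau_{A|X}$ gives the claimed inequality.

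The only subtle point is bookkeeping around the dimension: lemma \ref{lem:integration} uses the dimension of the ambient space in which $\Q$ sits with positive volume, and this must be matched to the dimension $d'$ of the affine hull of $\Q$ as used in the statement. Once one chooses the Lebesgue measure on the affine hull (rather than on the ambient probability-distribution space), both the definition of $\tau$ and the application of the integration lemma use the same dimension, and the $\binom{n+d'}{d'}^{-1}$ factor appears directly. Everything else reduces to direct application of the two lemmas, so I do not expect any further obstacle.
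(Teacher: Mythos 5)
Your proposal is correct and follows essentially the same strategy as the paper's proof: define $\tau$ as a normalized Lebesgue integral of $Q^{\otimes n}$ over a $d'$-dimensional parametrization of $\Q$, verify that $Q \mapsto Q^{\otimes n}(a|x)^{1/n}$ is concave via lemma~\ref{lem:concavity} (the paper groups the $n$ factors into frequencies $f_{a'x'}$ first, while you keep the product over rounds with exponent $1/n$ each and invoke linearity of the parametrization — both are valid), and then apply lemma~\ref{lem:integration}. The dimensional bookkeeping you flag is exactly the point the paper handles by introducing the bijective linear map $\phi \mapsto Q_\phi$ from $C \subseteq \Reals^{d'}$ onto $\Q$.
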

\begin{proof}
We view $\Q$ as a bounded convex subset of $\Reals^{\editB{}{|\Ah||\Xh|}}$. Because the affine hull of $\Q$ has dimension $d'$ there exists a bounded convex set $C \subseteq \Reals^{d'}$ and a bijective linear map $C \ni \phi \mapsto Q_\phi \in \Q$. Choose
\[
\tau_{A|X} = \frac{1}{\vol(C)} \int_C  Q_\phi ^{\otimes n} \der\phi.
\]
Now fix \editB{}{$a\in\A$ and $x\in\X$} and for \editB{}{$a' \in \Ah$ and $x' \in \Xh$} let $f_{a'x'}=|\{i|a_i = a' \text{ and } x_i=x'\}|/n$ be the frequency of the pair $(a',x')$ in $(a,x)$. 
Then
\[
\phi \mapsto  \left(Q_\phi ^{\otimes n}(a|x)\right)^{1/n} = \prod_{a' \in\A, x'\in\X}  Q_\phi (a'|x')^{f_{a'x'}}
\]
is a concave map by lemma \ref{lem:concavity} and by the linearity of $\phi \mapsto  Q_\phi $. Hence by lemma \ref{lem:integration}
\begin{align}
\tau(a|x) &= \frac{1}{\vol(C)}\int_C Q_\phi ^{\otimes n}(a|x) \der \phi\nonumber \\
& \geq \binom{n+d'}{d'}^{-1}\sup_{\phi\in C} Q_\phi ^{\otimes n}(a|x)\nonumber \\
& =  \binom{n+d'}{d'}^{-1}\sup_{Q\in\Q}Q^{\otimes n}(a|x).
\end{align}
\end{proof}

\subsection{Bounding $P_{A|X}$ by iid boxes}
\label{subsec:bound_pax_by_iid}
In this section we will show that a threshold theorem of the form of eq.~\eqref{eq:threshold_thm} implies that $P(a|x)$ can be bound by $Q^{\otimes n}(a|x)$ up to a polynomial factor. Together with lemma \ref{lem:def_tau} this will yield the proof for theorem \ref{thm:deFinetti_main} in the next section.

\begin{lem}
Let $n=k_1+k_2+...+k_d$ with $k_1,..,k_d \in \Integers$. Then
\[
(n+1)^{-(d-1)} \prod_{r=1}^d \left(\frac{n}{k_r}\right)^{k_r} \leq \binom{n}{k_1,k_2,...,k_d} \leq  \prod_{r=1}^d \left(\frac{n}{k_r}\right)^{k_r}.
\]
\label{lem:multinomial}
\end{lem}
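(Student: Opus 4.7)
The plan is to recognize the quantity
\[
M := \binom{n}{k_1,\dotsc,k_d} \prod_{r=1}^d \left(\frac{k_r}{n}\right)^{k_r}
\]
as the probability assigned by the multinomial distribution $\mathrm{Mult}(n,p)$ with parameters $p_r = k_r/n$ to the outcome $(k_1,\dotsc,k_d)$. After multiplying through by $\prod_r (n/k_r)^{k_r}$, the lemma reduces to showing $(n+1)^{-(d-1)} \leq M \leq 1$, with the convention $0^0 = 1$ absorbing any indices where $k_r = 0$ (such coordinates simply drop out of the effective multinomial).

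The upper bound is immediate: $M$ is a single nonnegative term of a probability distribution, so $M \leq 1$.

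For the lower bound I would first show that $(k_1,\dotsc,k_d)$ is a mode of this multinomial, i.e.\ $P(k) \geq P(j)$ for every composition $(j_1,\dotsc,j_d)$ of $n$. Expanding the ratio, this is equivalent to
\[
\frac{P(k)}{P(j)} = \prod_{r=1}^d \frac{j_r!\, k_r^{k_r - j_r}}{k_r!} \geq 1,
\]
which I would verify factor by factor. If $k_r > j_r$, write $k_r!/j_r! = k_r(k_r-1)\cdots(j_r+1)$, a product of $k_r - j_r$ terms each at most $k_r$, so the $r$-th factor is $\geq 1$; if $k_r < j_r$, write $j_r!/k_r! = j_r(j_r-1)\cdots(k_r+1)$, a product of $j_r - k_r$ terms each strictly greater than $k_r$, so again the $r$-th factor is $\geq 1$. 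Multiplying over $r$ gives the claim.

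Once the mode property is established, I would bound the support size: any composition of $n$ into $d$ nonnegative parts is determined by its first $d-1$ entries in $\{0,1,\dotsc,n\}$, so there are at most $(n+1)^{d-1}$ such tuples. Since the probabilities sum to $1$ and $M$ is the largest, $M \geq (n+1)^{-(d-1)}$, completing the lower bound. The only step requiring care is the mode comparison, but it is handled cleanly by the term-by-term argument above, so I do not anticipate a serious obstacle.
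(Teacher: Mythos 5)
Your proof is correct and takes a genuinely different route from the paper. The paper first establishes the $d=2$ (binomial) case via the Beta-function identity $\int_0^1 x^{k_1}(1-x)^{k_2}\,\der x = \frac{1}{n+1}\binom{n}{k_1}^{-1}$ combined with the concavity bound of lemma~\ref{lem:chsh_concavity_argument} (which they already need elsewhere), and then lifts this to general $d$ by writing the multinomial coefficient as a telescoping product of binomial coefficients $\prod_{r=1}^{d-1}\binom{n-\sum_{i<r}k_i}{k_r}$. Your argument instead interprets $M=\binom{n}{k_1,\dots,k_d}\prod_r(k_r/n)^{k_r}$ directly as a multinomial probability, gets the upper bound for free from $M\le 1$, and gets the lower bound by showing $M$ is the modal probability and dividing by the support size $(n+1)^{d-1}$. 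Your approach is self-contained and combinatorial, avoiding integration entirely, and the factor-by-factor mode verification via $P(k)/P(j)=\prod_r j_r!\,k_r^{k_r-j_r}/k_r!$ is clean and handles the $k_r=0$ edge cases properly (by restricting to the effective support). The paper's route is slightly more natural in context only because it reuses lemma~\ref{lem:chsh_concavity_argument}, which is already proved for the main theorem; as a standalone proof of lemma~\ref{lem:multinomial}, yours is arguably more elementary. Both give the same polynomial prefactor.
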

Here $\binom{n}{k_1,k_2,...,k_d} = n!/(k_1!...k_d!)$ is the multinomial coefficient.
\begin{proof}
Let us first prove the inequality for $d=2$, when the multinomial coefficient is just a binomial coefficient. It is a property of the Beta function that
\[
\int_0^1 x^{k_1}(1-x)^{k_2} = \frac{1}{n+1}\binom{n}{k_1}^{-1}
\label{eq:multinomial1}
\]
The integrand on the left hand side is maximized at $x = \frac{k_1}{n}$. Hence
\[
\frac{1}{n+1} \left(\frac{k_1}{n}\right)^{k_1}\left(\frac{k_2}{n}\right)^{k_2} \leq \int_0^1 x^{k_1}(1-x)^{k_2} \leq \left(\frac{k_1}{n}\right)^{k_1}\left(\frac{k_2}{n}\right)^{k_2},
\label{eq:multinomial2}
\]
by lemma \ref{lem:chsh_concavity_argument}. Combining eq.~\eqref{eq:multinomial1} and eq.~\eqref{eq:multinomial2} gives
\[
\frac{1}{n+1}\left(\frac{n}{k_1}\right)^{k_1}\left(\frac{n}{k_2}\right)^{k_2} \leq \binom{n}{k_1} \leq \left(\frac{n}{k_1}\right)^{k_1}\left(\frac{n}{k_2}\right)^{k_2} .
\label{eq:binomial}
\]
Now we prove the lemma for general $d \geq 2$. For this, observe that
\begin{align}
\binom{n}{k_1,...,k_d} &= \binom{n}{k_1}\binom{n-k_1}{k_2}\binom{n-k_1-k_2}{k_3}...\binom{n-k_1-...-k_{d-2}}{k_{d-1}}\nonumber \\
&=\prod_{r=1}^{d-1}\binom{n-\sum_{i=1}^{r-1}k_i}{k_r}.
\end{align}

Applying eq.~\eqref{eq:binomial} to each binomial coefficient completes the proof since
\begin{align}
&\prod_{r=1}^{d-1}\left(\frac{n-\sum_{i=1}^{r-1}k_i}{k_r}\right)^{k_r}\left(\frac{n-\sum_{i=1}^{r-1}k_i}{n-\sum_{i=1}^{r}k_i}\right)^{n-\sum_{i=1}^{r}k_i}\nonumber \\
&=\frac{1}{ \prod_{r=1}^{d-1}k_r^{k_r}}\frac{\prod_{r=1}^{d-1}\left(n-\sum_{i=1}^{r-1}k_i\right)^{n-\sum_{i=1}^{r-1}k_i}}{\prod_{r=1}^{d-1}\left(n-\sum_{i=1}^{r}k_i\right)^{n-\sum_{i=1}^{r}k_i}} \\
&= \prod_{r=1}^d \left(\frac{n}{k_r}\right)^{k_r}.
\end{align}
by a telescoping product argument.
\end{proof}

\begin{lem}
Let $P_{A|X}$ be a n-round box. If 
\[
\Pr_{P_{A|X},\mu^{\otimes n}}\left[ \freq^w(A,X) = f \right] \leq C\exp\left(-\inf_{f' \in \mathcal{F}_{\mu}} D(f||f') n\right)
\]
then
\[
\Pr_{P_{A|X},\mu^{\otimes n}}\left[ \freq^w(A,X) = f \right] \leq C (n+1)^{d-1} \sup_{Q \in \Q} \Pr_{Q^{\otimes n},\mu^{\otimes n}}\left[\freq^w(A,X) = f \right].
\label{eq:bound_by_iid}
\]
\label{lem:pBoundiid}
\end{lem}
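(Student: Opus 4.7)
The plan is to prove the inequality
\[
\sup_{Q \in \Q} \Pr_{Q^{\otimes n},\mu^{\otimes n}}\left[\freq^w(A,X) = f\right] \geq (n+1)^{-(d-1)} \exp\left(-n \inf_{f' \in \F_\mu} D(f||f')\right),
\]
since chaining this lower bound with the hypothesized exponential-decay bound on $\Pr_{P_{A|X},\mu^{\otimes n}}\left[\freq^w(A,X) = f\right]$ then immediately yields \eqref{eq:bound_by_iid}.

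The first step is to compute, for any fixed $Q \in \Q$, the distribution of $\freq^w(A,X)$ under $Q^{\otimes n}$ with inputs drawn from $\mu^{\otimes n}$. Since the rounds are independent and each round has $\Pr\left[w(A_i,X_i) = r\right] = p_r := \sum_{a,x:\, w(a,x) = r} Q(a|x)\mu(x)$, the vector of counts $(k_1,\dots,k_d)$ is multinomially distributed with parameters $n$ and $p = (p_1,\dots,p_d)$. This gives the identity
\[
\Pr_{Q^{\otimes n},\mu^{\otimes n}}\left[\freq^w(A,X) = f\right] = \binom{n}{k_1,\dots,k_d} \prod_{r=1}^d p_r^{k_r}
\]
for $f = (k_1/n,\dots,k_d/n)$. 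By definition of $\F_\mu$, the vector $p$ ranges over all of $\F_\mu$ as $Q$ ranges over $\Q$.

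Next, I would apply the lower bound on the multinomial coefficient from Lemma \ref{lem:multinomial} and simplify, obtaining
\[
\Pr_{Q^{\otimes n},\mu^{\otimes n}}\left[\freq^w(A,X) = f\right] \geq (n+1)^{-(d-1)} \prod_{r=1}^d (p_r/f_r)^{nf_r} = (n+1)^{-(d-1)} e^{-nD(f||p)},
\]
where the final equality is just the definition of the relative entropy. Taking the supremum over $Q \in \Q$ then converts the $D(f||p)$ term into $\inf_{f' \in \F_\mu} D(f||f')$ and establishes the target inequality above. There is no real obstacle; the argument is a direct computation, and Lemma \ref{lem:multinomial} is precisely what supplies the $(n+1)^{d-1}$ prefactor in the final statement. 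The only points deserving brief care are the boundary cases $k_r = 0$ or $p_r = 0$, which are handled consistently using the conventions $0^0 = 1$ and $0 \ln 0 = 0$.
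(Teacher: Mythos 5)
Your proposal is correct and follows essentially the same route as the paper's proof: identify the multinomial distribution of the counts under an iid box, apply the lower bound on the multinomial coefficient from Lemma~\ref{lem:multinomial} to obtain a $(n+1)^{-(d-1)}e^{-nD(f\|f')}$ lower bound, take the supremum over $Q$, and chain with the hypothesis. The only cosmetic difference is that you make the multinomial structure and the boundary conventions ($0^0=1$, $0\ln 0 = 0$) explicit, where the paper leaves them implicit.
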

\begin{proof}
This is actually just a statement on the two right hand sides. Let $f = (\frac{k_1}{n},...\frac{k_d}{n}) \in \simplex{d}$, let $Q\in\Q$ and let \[f' =  \left(\sum_{\substack{a,x \\ w(a,x)=r}}Q(a|x)\mu(x) \right)_{r=1...d} \in \simplex{d}\] be the element of $\mathcal{F}_\mu$ belonging to $Q$. Then  
\begin{align}
\Pr_{Q^{\otimes n},\mu^{\otimes n}}\left[\freq^w(A,X) = f \right] &= \binom{n}{k_1,...,k_d}\prod_{r=1}^d (f'_r)^{k_r} &  \nonumber \\
&\geq \frac{1}{(n+1)^{d-1}} \prod_{r=1}^d \left(\frac{nf'_r}{k_r}\right)^{k_r} & \text{by lemma \ref{lem:multinomial}} &\nonumber \\
&=\frac{1}{(n+1)^{d-1}} \exp(-D(f||f')n).
\label{eq:iidbound_singleQ}
\end{align}
where the last equality follows directly from the definition of the relative entropy
\[
D(f||f') = \sum_{r=1}^d f_r (\ln(f_r)-\ln(f'_r)).
\]
Taking the supremum over $Q$ gives
\[
\sup_{Q \in \Q} \Pr_{Q^{\otimes n},\mu^{\otimes n}}\left[\freq^w(A,X) = f \right] \geq \frac{1}{(n+1)^{d-1}}\exp\left(-\inf_{f' \in \mathcal{F}_{\mu}} D(f||f') n\right)
\]
which completes the proof.
\end{proof}
\subsection{Proof of theorem \ref{thm:deFinetti_main}}
\label{subsec:proof_main_definetti}
Now we are ready to prove the general theorem \ref{thm:deFinetti_main}. For this, we use lemma \ref{lem:def_tau} to  show that the entries of the de Finetti box are at most smaller by a polynomial factor then the corresponding entries of any iid box. Then we use lemma \ref{lem:pBoundiid} to show that the threshold theorem implies that the probability of a frequency $f$ under $P_{A|X}$ can be bounded, up to a polynomial factor, by probability of $f$ under some iid box (but possibly a different iid box for each $f$). Combining both lemmas yields the proof of part 1. Part 2 will follow from part 1 by using the definition of $w$-symmetry, and part 3 will follow directly from lemma \ref{lem:multinomial}. 

\begin{proof}[Proof of theorem \ref{thm:deFinetti_main}]
\begin{enumerate}
\item This part follows directly by combining lemma \ref{lem:pBoundiid} and  lemma \ref{lem:def_tau}. Suppose
\[
\Pr_{P_{A|X},\mu^{\otimes n}}\left[ \freq^w(A,X) = f \right] \leq C\exp\left(-\inf_{f' \in \mathcal{F}_{\mu}} D(f||f') n\right).
\]
By lemma \ref{lem:pBoundiid} we have
\[
\Pr_{P_{A|X},\mu^{\otimes n}}\left[ \freq^w(A,X) = f \right] \leq C (n+1)^{d-1} \sup_{Q \in \Q} \Pr_{Q^{\otimes n},\mu^{\otimes n}}\left[\freq^w(A,X) = f \right].
\]
Since by lemma \ref{lem:def_tau}
\[
\Pr_{Q^{\otimes n},\mu^{\otimes n}}\left[\freq^w(A,X) = f \right] \leq \binom{n+d'}{d'}\Pr_{\tau,\mu^{\otimes n}}\left[\freq^w(A,X) = f \right] 
\]
it follows
\[
\Pr_{P_{A|X},\mu^{\otimes n}}\left[ \freq^w(A,X)=f\right] \leq C \binom{n+d'}{d'} (n+1)^{d-1} \Pr_{\tau,\mu^{\otimes n}}\left[ \freq^w(A,X)=f\right].
\]

\item 
For this part, we have by hypothesis that $P_{A|X}$ and every box in $\Q$ has $w$-symmetry. Then also $\tau_{A|X}$ has $w$-symmetry. Take any \editB{}{$a\in\A, x\in \X$}, and define $f=\freq^w(a|x)$. Then we have 
\[
\Pr_{P_{A|X},\mu^{\otimes n}}\left[ \freq^w(A,X)=f\right] = P(a|x)\sum_{\substack{a\in\A^n, x\in\X^n \\ \freq^w(a,x)=f}}\mu^{\otimes n}(x),
\label{eq:w-symmetry1}
\]
and similarly 
\[
\Pr_{\tau_{A|X},\mu^{\otimes n}}\left[ \freq^w(A,X)=f\right] = \tau(a|x) \sum_{\substack{a\in\A^n, x\in\X^n \\ \freq^w(a,x)=f}}\mu^{\otimes n}(x).
\label{eq:w-symmetry2}
\]
From eq.~\eqref{eq:w-symmetry1}, eq.~\eqref{eq:w-symmetry2} and part 1 it follows that
\[
P(a|x) \leq C \binom{n+d'}{d'} (n+1)^{d-1} \tau(a|x).
\]

\item Now assume
\[
P(a|x) \leq \tilde{C} \tau(a|x)
\]
with $\tilde{C} = C\binom{n+d'}{d'} (n+1)^{d-1}$. 
Since $\tau(a|x) \leq \sup_{Q \in \Q} Q^{\otimes n}(a|x)$ it follows for $f=\left(\frac{k_1}{n}...\frac{k_d}{n}\right)$ that
\begin{align}
\Pr_{P_{A|X},\mu^{\otimes n}}\left[ \freq^w(A,X) = f \right] &\leq \tilde{C} \sup_{Q \in \Q} \Pr_{Q^{\otimes n},\mu^{\otimes n}}\left[\freq^w(A,X) = f \right]  & \nonumber \\
&= \tilde{C}\sup_{f' \in \mathcal{F}_\mu} \binom{n}{k_1,...,k_d} \prod_{r=1}^d (f'_r)^{k_r}& \nonumber \\
&\leq \tilde{C}\sup_{f' \in \mathcal{F}_\mu} \prod_{r=1}^d \left(\frac{nf'_r}{k_r}\right)^{k_r} & \text{by lemma \ref{lem:multinomial}} \nonumber \\
&= \tilde{C}\exp\left(-\inf_{f' \in \mathcal{F}_{\mu}} D(f||f') n\right).&
\end{align}
\end{enumerate}
\end{proof}

\bibliographystyle{unsrtnat}
\bibliography{bibliography}
\end{document}